\newcommand{\comment}[1]{}
\newcommand{\fillbox}{\hspace*{\fill}\(\Box\)}
\newcommand{\scripte}{\mathcal{E}}
\newcommand{\scriptv}{\mathcal{V}}
\newcommand{\scriptl}{\mathcal{L}}
\newcommand{\scripts}{\mathcal{S}}
\newcommand{\graphh}{\textit{H}}
\newcommand{\bfA}{{\bf A}}
\newcommand{\bfH}{{\bf H}}
\newcommand{\bfQ}{{\bf Q}}
\newcommand{\bfM}{{\bf M}}
\newcommand{\bfv}{v}
\newcommand{\matrixm}{\textbf{M}}
\newcommand{\matrixh}{\textbf{H}}
\begin{document}

\title{Iterative Approximate Consensus in the presence of Byzantine Link Failures
\thanks{\normalsize This research is supported in part by National Science Foundation award CNS 1329681. Any opinions, findings, and conclusions or recommendations expressed here are those of the authors and do not necessarily reflect the views of the funding agencies or the U.S. government.}}

\author{Lewis Tseng$^{1}$, and Nitin Vaidya$^{2}$}

\institute{ \normalsize $^1$ Department of Computer Science,\\
 \normalsize $^2$ Department of Electrical and Computer Engineering,
 and\\ \normalsize University of Illinois at Urbana-Champaign\\ \normalsize Email: \{ltseng3, nhv\}@illinois.edu\\~\\~\\~Technical Report}


\maketitle

\begin{abstract}{\normalfont
This paper explores the problem of reaching approximate consensus in synchronous point-to-point networks, where each directed link of the underlying communication graph represents a communication channel between a pair of nodes. We adopt the {\em transient Byzantine link} failure model \cite{Santoro_link,Santoro_link2}, where an omniscient adversary controls a subset of the {\em directed} communication links, but the nodes are assumed to be {\em fault-free}.

~

Recent work has addressed the problem of reaching approximate consensus in incomplete graphs with Byzantine {\em nodes} using a {\em restricted class} of iterative algorithms that maintain only a small amount of memory across iterations \cite{vaidya_PODC12,Tseng_general,vaidya_icdcn14,Sundaram_condition}. However, to the best of our knowledge, we are the first to consider approximate consensus in the presence of Byzantine {\em links}. We extend our past work that provided exact characterization of graphs in which the iterative approximate consensus problem in the presence of Byzantine {\em node} failures is solvable \cite{vaidya_PODC12,Tseng_general}. In particular, we prove a {\em tight} necessary and sufficient condition on the underlying communication graph for the existence of iterative approximate consensus algorithms under {\em transient Byzantine link} model. The condition answers (part of) the open problem stated in \cite{Santoro_link2}.
}
\end{abstract}




\section{Introduction}
\label{s_intro}
Approximate consensus can be related to many distributed computations in networked systems, such as data aggregation \cite{Kempe_gossip}, decentralized estimation  \cite{noisy_link}, and flocking \cite{Jadbabaie}. Extensive work has addressed the problem in the presence of {\em Byzantine nodes} \cite{psl_BG_1982} in either complete networks \cite{AA_Dolev_1986,AA_optimal} or arbitrary directed networks \cite{vaidya_PODC12,Sundaram_condition,Tseng_general}. As observed in \cite{Biely_hybrid,Schmid_link}, link failures become more and more prevalent. Thus, it is of interest to consider the problem of approximate consensus in the presence of Byzantine {\em link} failures.

This paper explores such problem in synchronous point-to-point networks, where each directed link of the underlying communication graph represents a communication channel between a pair of nodes. The link failures are modeled using a {\em transient Byzantine link} failure model (formal definition in Section \ref{s_models}) \cite{Santoro_link,Santoro_link2}, in which different sets of link failures may occur at different time. We consider the problem in arbitrary directed graphs using a {\em restricted class} of iterative algorithms that maintain only a small amount of memory across iterations, e.g., the algorithms do not require the knowledge of the network topology. Such iterative algorithms are of interest in networked systems, since they have low complexity and do not rely on global knowledge \cite{Sundaram_condition}. In particular, the iterative algorithms have the following properties, which we will state more formally later:

\begin{itemize}
\item {\bf Initial state} of each node is equal to a real-valued {\em input} provided to that node.

\item {\bf Termination}: The algorithm terminates in finite number of iterations.

\item {\bf Validity}: After each iteration of the algorithm, the state of each node must stay in the {\em convex hull} of the states of all the nodes at the end of the {\em previous} iteration.

\item {\bf $\epsilon$-agreement}: For any $\epsilon > 0$, when the algorithm terminates, the difference between any pair of nodes is guaranteed to be within $\epsilon$.
\end{itemize}

\paragraph{Main Contribution}

This paper extends our recent work on approximate consensus under node failures \cite{vaidya_PODC12,Tseng_general}. The main contribution is identifying a {\em tight} necessary and sufficient condition for the graphs to be able to reach approximate consensus under {\em transient Byzantine link} failure models \cite{Santoro_link,Santoro_link2} using restricted iterative algorithms; our proof of correctness follows a structure previously used in our work to prove correctness of other consensus algorithms in incomplete networks \cite{Tseng_general,vaidya_icdcn14}. The use of matrix analysis is inspired by the prior work on non-fault-tolerant consensus (e.g., \cite{Jadbabaie,AA_convergence_markov}).

\paragraph{Related Work}

Approximate consensus has been studied extensively in synchronous as well as asynchronous systems. Bertsekas and Tsitsiklis explored reaching approximate consensus without failures in synchronous dynamic network, where the underlying communication graph is time-varying \cite{AA_convergence_markov}. Dolev et al. considered approximate consensus in the presence of {\em Byzantine nodes} in both synchronous and asynchronous systems \cite{AA_Dolev_1986}, where the network is assumed to be a clique, i.e., a complete network. Subsequently, for complete graphs, Abraham et al. proposed an algorithm to achieve approximate consensus with {\em Byzantine nodes} in asynchronous systems using optimal number of nodes \cite{AA_optimal}. 

Recent work has addressed approximate consensus in incomplete graphs with faulty {\em nodes} \cite{vaidya_PODC12,Sundaram_condition,Tseng_general}. \cite{vaidya_PODC12,Tseng_general} and \cite{Sundaram_condition} showed exact characterizations of graphs in which the approximate consensus problem is solvable in the presence of Byzantine nodes and malicious nodes, respectively. Malicious node is a restricted type of Byzantine node in which every node is forced to send the identical message to all of its neighbors. 

Much effort has also been devoted to the problem of achieving consensus in the presence of link failures \cite{HeardOf,Biely_hybrid,Schmid_link,Santoro_link,Santoro_link2}. Charron-Bost and Schiper proposed a HO (Heard-Of) model that captures both the link and node failures at the same time \cite{HeardOf}. However, the failures are assumed to be benign in the sense that no corrupted message will ever be received in the network. Santoro and Widmayer proposed the {\em transient} Byzantine link failure model: a different set of links can be faulty at different time \cite{Santoro_link,Santoro_link2}. They characterized a necessary condition and a sufficient condition for undirected networks to achieve consensus in the transient link failure model; however, the conditions are {\em not} tight (i.e., do not match): necessary and sufficient conditions are specified in terms of node degree and edge-connectivity,\footnote{A graph $G=(\scriptv, \scripte)$ is said to be $k$-edge connected, if $G'=(\scriptv, \scripte - X)$ is connected for all $X \subseteq \scripte$ such that $|X| < k$.} respectively. Subsequently, Biely et al. proposed another link failure model that imposes an upper bound on the number of faulty links incident to each node \cite{Biely_hybrid}. As a result, it is possible to tolerate $O(n^2)$ link failures with $n$ nodes in the new model. Under this model, Schmid et al. proved lower bounds on number of nodes, and number of rounds for achieving consensus \cite{Schmid_link}. However, incomplete graphs were not considered in \cite{Biely_hybrid,Schmid_link}.

For consensus problem, it has been shown in \cite{impossible} and \cite{Santoro_link2}, respectively, that an undirected graph of $2f+1$ node-connectivity\footnote{A graph $G=(\scriptv, \scripte)$ is said to be $k$-node connected, if $G'=(\scriptv - X, \scripte)$ is connected for all $X \subseteq \scriptv$ such that $|X| < k$.} and edge-connectivity is able to tolerate $f$ Byzantine nodes and $f$ Byzantine links. Independently, researchers showed that $2f+1$ node-connectivity is both necessary and sufficient for the problem of information dissemination in the presence of either $f$ faulty nodes \cite{SS_node} or $f$ {\em fixed} faulty links \cite{SS_link}.\footnote{Unlike the ``transient" failures in our model, the faulty links are assumed to be fixed throughout the execution of the algorithm in \cite{SS_link}.} However, both node-connectivity and edge-connectivity are not adequate for our problem as illustrated in Section \ref{sec:iacbl}. 

Link failures have also been addressed under other contexts, such as distributed method for wireless control network \cite{control}, reliable transmission over packet network \cite{packet}, or estimation over noisy links \cite{noisy_link}.

\section{System Model}
\label{s_models}

{\em Communication model:}
The system is assumed to be {\em synchronous}. 
The communication network is modeled as a simple {\em directed} graph $G(\scriptv,\scripte)$, where $\scriptv=\{1,\dots,n\}$ is the set of $n$ nodes, and $\scripte$ is the set of directed edges between the nodes in $\scriptv$.
With a slight abuse of terminology, we will use the terms {\em edge}
and {\em link} interchangeably in our presentation.
In simple graph, there is at most one directed edge from any node $i$ to some other node $j$ (But our results can be extended to multi-graph).
We assume that $n\geq 2$, since the consensus problem for $n=1$ is trivial.
Node $i$ can reliably transmit messages to node $j$ if and only if
the directed edge $(i,j)$ is in $\scripte$.
Each node can send messages to itself as well; however,
for convenience, we exclude {\em self-loops} from set $\scripte$.
That is, $(i,i)\not\in\scripte$ for $i\in\scriptv$.

For each node $i$, let $N_i^-$ be the set of nodes from which $i$ has incoming edges.
That is, $N_i^- = \{\, j ~|~ (j,i)\in \scripte\, \}$.
Similarly, define $N_i^+$ as the set of nodes to which node $i$
has outgoing edges. That is, $N_i^+ = \{\, j ~|~ (i,j)\in \scripte\, \}$.
Since we exclude self-loops from $\scripte$,
$i\not\in N_i^-$ and $i\not\in N_i^+$. 
However, we note again that each node can indeed send messages to itself. Similarly, let $E_i^-$ be the set of incoming links incident to node $i$. That is, $E_i^-$ contains all the links from nodes in $N_i^-$ to node $i$, i.e., $E_i^- = \{(j,i)~|~j \in N_i^-\}$.\\

{\em Failure Model:}
We consider the transient Byzantine {\em link} failure model \cite{Santoro_link,Santoro_link2} for iterative algorithms in directed network. All nodes are assumed to be {\em fault-free}, and only send a single message once in each iteration. 
A link $(i,j)$ is said to be faulty if the message sent by node $i$ is different from the message received by node $j$ in some iteration. Note that in our model, it is possible that link $(i,j)$ is faulty while link $(j,i)$ is fault-free.\footnote{For example, the described case is possible in wireless network, if node $i$'s transmitter is broken while node $i$'s receiver and node $j$'s transmitter and receiver all function correctly.} In every iteration, up to $f$ links may be faulty, at most $f$ links may deliver incorrect message or drop message. Note that different sets of link failures may occur in different iterations.

A faulty link may tamper or drop messages. Also, the faulty links may be controlled by a single omniscient adversary. That is, the adversary is assumed to have a complete knowledge of the execution of
the algorithm, including the states of all the nodes,
contents of messages the other nodes send to each other,
the algorithm specification, and the network topology.

\section{IABC Algorithms and Example Network}
\label{sec:iacbl}

In this section, we describe the structure of the
{\em Iterative Approximate Byzantine Consensus} (IABC) algorithms of interest, and state conditions that they must satisfy. The IABC structure is identical to the one in our prior work on node failures \cite{vaidya_PODC12,Tseng_general,vaidya_icdcn14}.

Each node $i$ maintains state $v_i$, with $v_i[t]$ denoting the state
of node $i$ at the {\em end}\, of the $t$-th iteration of the algorithm ($t \geq 0$).
Initial state of node $i$, $v_i[0]$, is equal to the initial {\em input}\, provided to node $i$. At the {\em start} of the $t$-th iteration ($t>0$), the state of
node $i$ is $v_i[t-1]$. We assume that the input at each node is lower bounded by a constant $\mu$ and upper bounded by a constant $U$. The
iterative algorithm may terminate after a number of iterations that is a function of $μ$ and $U$. $\mu$ and $U$ are assumed to be known a priori. 

The IABC algorithms of interest will require each node $i$
to perform the following three steps in iteration $t$, where $t>0$.
Note that the message sent via faulty links may deviate from this specification.

\begin{enumerate}
\item {\em Transmit step:} Transmit current state, namely $v_i[t-1]$, on all outgoing edges
 (to nodes in $N_i^+$).

\item {\em Receive step:} Receive values on all incoming edges (from nodes in $N_i^-$). 
Denote by $r_i[t]$ the vector of values received by node $i$ from its
neighbors. The size of vector $r_i[t]$ is $|N_i^-|$. The values sent in iteration $t$ are received in the same iteration (unless dropped by the faulty links).

\item {\em Update step:} Node $i$ updates its state using a transition function $T_i$ as
follows. $T_i$ is a part of the specification of the algorithm, and takes
as input the vector $r_i[t]$ and state $v_i[t-1]$.
\begin{eqnarray}
v_i[t] & = &  T_i ~( ~r_i[t]\,,\,v_i[t-1] ~)
\label{eq:T_i}
\end{eqnarray}

\end{enumerate}

~

The following properties must be satisfied by an IABC algorithm
in the presence of up to $f$ Byzantine faulty links:
\begin{itemize}
\item {\bf Termination}: the algorithm terminates in finite number of iterations.\\

\item {\bf Validity:} $\forall t>0,
~~\min_{i \in \scriptv} v_i[t] \ge \min_{i \in \scriptv} v_i[t-1]$
 and \\
$~~~~~~~~~~~~~~~~~~~~\max_{i \in \scriptv} v_i[t] \ge \max_{i \in \scriptv} v_i[t-1]$.\\

\item {\bf $\epsilon$-agreement:} If the algorithm terminates after $t_{end}$ iterations, then $\forall i, j \in \scriptv, |v_i[t_{end}] - v_j[t_{end}]| < \epsilon$.
\end{itemize}
The objective in this paper is to identify the necessary and sufficient
conditions for the existence of a {\em correct} IABC algorithm (i.e.,
an algorithm satisfying the above properties) for a given $G(\scriptv,\scripte)$.


\paragraph{Example Network}

We give an example showing that node- and edge-connectivity are not adequate for specifying the {\em tight} condition in directed graphs. Consider the case when $f=1$ in the network in Figure \ref{f:eg}. In the network, nodes $A, B, C, D$ form a clique, while node $E$ has only incoming edges from nodes $B, C, D$. It is obvious that the node- and edge-connectivity of the network are less than $2f+1 = 3$, since node $E$ does not have any outgoing links to any other node. However, the approximate consensus is solvable using IABC algorithms under one (directed) faulty link, since the network satisfies the sufficient condition proved later. The proof is presented in \ref{a:example}. Therefore, $2f+1$ node- and edge-connectivity are not necessary for the existence of IABC algorithms.

\begin{figure}[hbt!]
\centering
\includegraphics[width=5cm]{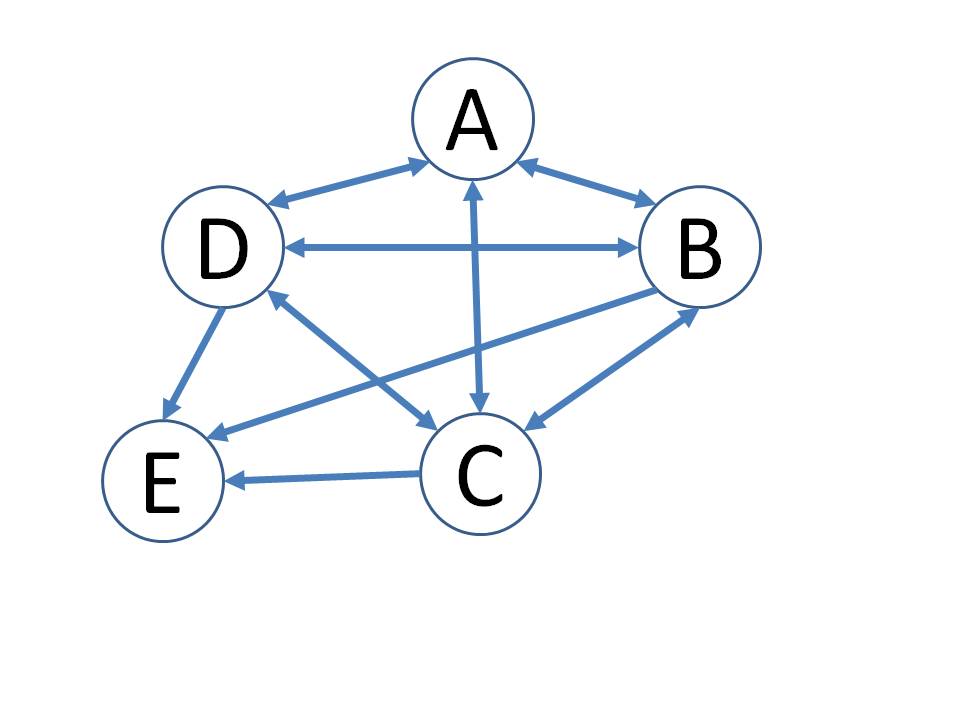}
\vspace*{-10pt}
\caption{Example Network}
\label{f:eg}
\end{figure}

\section{Necessary Condition}
\label{s_necessary}



For a correct iterative approximate consensus algorithm to exists in the presence of Byzantine link failures, the graph $G(\scriptv, \scripte)$ must satisfy the necessary condition proved in this section.
We now define relations $\Rightarrow$
and $\not\Rightarrow$ that are used frequently in our proofs.

\begin{definition}
\label{def:absorb}
For non-empty disjoint sets of nodes $A$ and $B$ in $G(\scriptv, \scripte)$, $A \Rightarrow B$ iff there exists a node $i\in B$ that has at least $f+1$ incoming links from nodes in $A$, i.e., $|\{(j,i)~|~j\in A,~(j,i)\in\scripte\}|>f$; $A\not\Rightarrow B$ iff $A\Rightarrow B$ is {\em not} true.



\end{definition}

~

\noindent {\em Condition P}~: Consider graph $G(\scriptv, \scripte)$. Denote by $F$ a subset of $\scripte$ such that $|F| \leq f$.
Let sets $L,C,R$ form a partition of $\scriptv$, such that
both $L$ and $R$ are non-empty. Then, in $G' = (\scriptv, \scripte-F)$, at least one of the two conditions below must be true: (i) $C\cup R\Rightarrow L$; (ii) $L\cup C\Rightarrow R$.

\begin{theorem}
\label{thm:nc}
Suppose that a correct IABC algorithm exists for $G(\scriptv, \scripte)$. Then $G$ satisfies {\em Condition P}.
\end{theorem}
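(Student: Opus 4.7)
The plan is to argue by contradiction. Suppose Condition P fails for $G$, witnessed by a set $F \subseteq \scripte$ with $|F| \le f$ and a partition $L, C, R$ of $\scriptv$ with $L, R \ne \emptyset$, such that in $G' = (\scriptv, \scripte - F)$ every node in $L$ has at most $f$ incoming edges from $C \cup R$ and every node in $R$ has at most $f$ incoming edges from $L \cup C$. Suppose further that a correct IABC algorithm exists. I will exhibit an execution in which this algorithm cannot achieve $\epsilon$-agreement for any $\epsilon < U - \mu$, a contradiction.

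The target execution $\gamma$ assigns input $\mu$ to every node in $L \cup C$ and input $U$ to every node in $R$. In each iteration the adversary keeps precisely the links of $F$ faulty; on each faulty link $(j, i) \in F$ the adversary transmits $\mu$ if $i \in L$, transmits $U$ if $i \in R$, and transmits the sender's true state if $i \in C$.

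The heart of the argument is the claim that $v_i^{\gamma}[t] = \mu$ for every $i \in L$ and every $t$ (and symmetrically $v_j^{\gamma}[t] = U$ for every $j \in R$). For each $i \in L$ I would build an auxiliary execution $\gamma_i$ in which every node starts at $\mu$ and the faulty set is $F_i := \{\,(j,i)\in\scripte : j \in C \cup R\,\} \setminus F$. The Condition P failure guarantees $|F_i| \le f$, so $\gamma_i$ is a legal execution; since all initial values equal $\mu$, validity forces $v_k^{\gamma_i}[s] = \mu$ for every $k$ and $s$. The omniscient adversary in $\gamma_i$ tampers each link in $F_i$ so that, at iteration $s$, it delivers exactly the value $v_j^{\gamma}[s]$ that the same edge delivers in $\gamma$.

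An induction on $t$ then shows $v_i^{\gamma}[t] = \mu$ for every $i \in L$. In the inductive step, I would verify that $i$'s received multiset at iteration $t+1$ coincides in $\gamma$ and in $\gamma_i$: each edge from $L$ delivers $\mu$ in both (by the inductive hypothesis on $\gamma$, by validity in $\gamma_i$, and by the adversary's strategy on any $L \to i$ link in $F$); each $(j, i) \in F_i$ delivers $v_j^{\gamma}[t]$ in both (honest in $\gamma$, adversary-supplied in $\gamma_i$); and each $(j, i) \in F$ with $j \in C \cup R$ delivers $\mu$ in both (adversary-supplied in $\gamma$, honest-with-state-$\mu$ in $\gamma_i$). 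Since $v_i^{\gamma}[t] = v_i^{\gamma_i}[t] = \mu$, determinism of $T_i$ gives $v_i^{\gamma}[t+1] = v_i^{\gamma_i}[t+1] = \mu$. A symmetric construction that uses $L \cup C \not\Rightarrow R$ in $G - F$ establishes $v_j^{\gamma}[t] = U$ for every $j \in R$, and the two together contradict $\epsilon$-agreement. The main obstacle is the careful case-by-case matching of messages in $\gamma$ and $\gamma_i$; the bound $|F_i| \le f$ that makes $\gamma_i$ legal is exactly what Condition P's failure supplies, which is also why the threshold $f+1$ appears in the definition of $\Rightarrow$.
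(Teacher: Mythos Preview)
Your argument is correct and follows essentially the same indistinguishability strategy as the paper: for each $i\in L$, the alternate faulty set $F_i=\{(j,i):j\in C\cup R\}\setminus F$ that you use is exactly the paper's set $E'_i-F$, and your auxiliary execution $\gamma_i$ is a more explicit formalization of what the paper calls ``scenario (b).'' The only cosmetic differences are that the paper has the links in $F$ deliver out-of-range values $m^-<m$ (to $L$) and $M^+>M$ (to $R$) rather than your boundary values $\mu$ and $U$, and that the paper gives nodes in $C$ arbitrary inputs in $[m,M]$ rather than the single value $\mu$; neither change affects the logic.
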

\begin{proof}

The proof is by contradiction.
Let us assume that a correct IABC algorithm exists,
and for some node partition $L, C, R$ and a subset $F \subseteq \scripte$ such that $|F| \leq f$, $C\cup R\not\Rightarrow L$ and $L\cup C\not\Rightarrow R$ in $G'=(\scriptv, \scripte-F)$.
Thus, for any $i\in L$, $|\{(k,i)~|~k\in C\cup R,~(k,i) \in \scripte-F\}|<f+1$. Similarly,
for any  $j\in R$, $|\{(k,j)~|~k\in L\cup C,~(k,j) \in \scripte-F\}|<f+1$.

Also assume that the links in $F$ (if $F$ is non-empty) all behave faulty, and the rest of the links are all fault-free in every iteration. Note that the nodes are not aware of the identity of the faulty links.

Consider the case when (i) each node in $L$ has initial input $m$, (ii) each
node in $R$ has initial input $M$, such that $M>m$,
and (iii) each node in $C$, if $C$ is non-empty,
has an input in the interval $[m,M]$. Define $m^-$ and $M^+$ such that $m^- <m<M<M^+$.

In the {\em Transmit Step} of iteration 1, each node $k$, sends to nodes in $N_k^+$ value $v_k[0]$; however, some values sent via faulty links may be tampered. Suppose that the faulty links in $F$ (if non-empty) tamper the messages sent via them in the following way (i) if the link is an incoming link to a node in $L$, then $m^- < m$ is deliver to that node;
(ii) if the link is an incoming link to a node in $R$, then $M^+ > M$ is deliver to that node; and (iii) if the link is an incoming link to a node in $C$, then some arbitrary value in interval $[m,M]$ is deliver to that node. This behavior is possible since links in $F$ are Byzantine faulty by assumption. Note that $m^-<m<M<M^+$. 

Consider any node $i \in L$. Recall that $E_i^-$ the set of all the node $i$'s incoming links. Let $E'_i$ be the subset of $E_i^-$ that are incident to nodes in $C\cup R$, i.e., 

\[
E'_i = \{(j, i)~|~j \in C\cup R, (j,i)\in\scripte\}.
\]

Since $|F|\leq f$, $|E_i^-\cap F|\leq f$.
Moreover, by assumption $C\cup R\not\Rightarrow L$; thus, $|E'_i - F| \leq |E'_i|\le f$.
Node $i$ will then receive $m^-$ via the links in $E_i^-\cap F$ (if non-empty)
and values in $[m,M]$ via the links in $E'_i - F$, and
$m$ via the rest of the links, i.e., links in $E_i^- - E_i' - F$.

Consider the following two cases:
\begin{itemize}

\item Both $E_i^-\cap F$ and $E'_i - F$ are non-empty:

In this case, recall that $|E_i^-\cap F|\leq f$ and $|E'_i - F|\leq f$.
From node $i$'s perspective, consider two possible scenarios:
(a) links in $E_i^-\cap F$ are faulty, and the other
links are fault-free, and (b) links in $E'_i - F$ are faulty, and the
other links are fault-free.

In scenario (a), from node $i$'s perspective, all the nodes may have sent values
in interval $[m,M]$, but the faulty links have delivered $m^-$ to node $i$. According to the validity
property, $v_i[1] \geq m$. On the other hand, in scenario (b), all the
nodes may have sent values $m^-$ or $m$, where $m^-<m$; so $v_i[1] \leq m$, according to
the validity property. Since node $i$ does not know whether the
correct scenario is (a) or (b), it must update its state to satisfy the
validity property in both cases. Thus, it follows that $v_i[1] = m$.

\item
At most one of $E_i^-\cap F$ and $E'_i - F$ is non-empty: 

Recall that by assumption, $|E_i^-\cap F|\leq f$ and $|E'_i - F|\leq f$. Since at most one of the set is non-empty, 
$|(E_i^-\cap F)\cup (E'_i-F)|\leq f $.
From node $i$'s perspective,
it is possible that the links in $(E_i^-\cap F)\cup (E'_i - F)$ are all faulty,
and the rest of the links are fault-free. 
In this situation, the values sent to node $i$ via all the fault-free links are all $m$, and therefore, $v_i[1]$ must be set to $m$
as per the validity property.

\end{itemize}
Thus, $v_i[1]=m$ for each node $i\in L$.
Similarly, we can show that $v_j[1] = M$ for each node $j \in R$.

Now consider the nodes in set $C$, if $C$ is non-empty.
All the values received by the nodes in $C$ are in $[m,M]$, therefore,
their new state must also remain in $[m,M]$, as per the {\em validity} property.

The above discussion implies that, at the end of iteration 1,
the following conditions hold true: (i) state of each node in $L$ is
$m$, (ii) state of each node in $R$ is $M$, and (iii) state of each node
in $C$ is in the interval $[m,M]$. These conditions are identical to the initial conditions
listed previously. Then, by a repeated application of the above
argument (proof by induction), it follows that for
any $t \geq 0$, $v_i[t] = m$ for all $\forall i \in L$, $v_j[t] = M$
for all $j \in R$ and $v_k[t]\in[m,M]$ for all $k\in C$.

Since both $L$ and $R$ are non-empty, the {\em $\epsilon$-agreement} property
is not satisfied. A contradiction.
\fillbox
\end{proof}

Theorem \ref{thm:nc} shows that {\em Condition P} is necessary. However, {\em Condition P} is not intuitive. Below, we state an equivalent condition {\em Condition S} that is
easier to interpret. To facilitate the
statement, we introduce the notions of ``source component'' and ``link-reduced graph'' using the following three definitions. The link-reduced graph is analogous to the similar concept introduced in our prior work on node failures \cite{vaidya_PODC12,Tseng_general,vaidya_icdcn14}. 

\begin{definition}
\label{def:decompose}
{\bf Graph decomposition:}
Let $H$ be a directed graph. Partition graph $H$ into non-empty strongly connected components,
$H_1,H_2,\cdots,H_h$, where $h$ is a non-zero integer dependent on graph $H$,
such that
\begin{itemize}
\item every pair of nodes {within} the same strongly connected component has directed
paths in $H$ to each other, and
\item for each pair of nodes, say $i$ and $j$, that belong to
two {\em different} strongly connected components, either $i$ does not have a
directed path to $j$ in $H$, or $j$ does not have a directed path to $i$ in $H$.
\end{itemize}
Construct a graph $H^d$ wherein each strongly connected component $H_k$ above is represented
by vertex $c_k$, and there is an edge from vertex $c_k$ to vertex $c_l$ if and only if
the nodes in $H_k$ have directed paths in $H$ to the nodes in $H_l$.
\end{definition}
It is known that the decomposition
graph $H^d$ is a directed {\em acyclic} graph \cite{dag_decomposition}.

\begin{definition}
{\bf Source component}:
Let $H$ be a directed graph, and let $H^d$ be its decomposition as per
Definition~\ref{def:decompose}. 
Strongly connected component $H_k$ of $H$ is said to be a {\em source component}
if the corresponding vertex $c_k$ in $H^d$ is \underline{not} reachable from any
other vertex in $H^d$. 
\end{definition}

\begin{definition}
\label{def:reduced} {\bf Link-Reduced Graph:}
For a given graph $G(\scriptv,\scripte)$ and $F\subset\scripte$,
a graph $G_F(\scriptv,\scripte_F)$
is said to be a {\em link-reduced graph}, if
$\scripte_F$ is obtained by first removing from $\scripte$ all the links in $F$, and {\em then} removing up to $f$ other incoming
links at each node in $\scripte - F$.
\end{definition}
Note that for a given $G(\scriptv,\scripte)$ and a given $F$,
multiple link-reduced graphs $G_F$ may exist. \\

Now, we state {\em Condition S}:

~

\noindent {\em Condition S}: Consider graph $G(\scriptv,\scripte)$. For any $F \subseteq \scripte$ such that $|F| \leq f$, every
link-reduced graph $G_F$ obtained as per Definition \ref{def:reduced} 
must contain exactly one {\em source component}.

Then, we show that {\em Condition S} and {\em Condition P} specify the equivalent property of the graph.

\begin{lemma}
\label{lemma:p-to-s}
Suppose that {\em Condition P} holds for graph $G(\scriptv,\scripte)$. Then $G$ satisfies {\em Condition S}.

\end{lemma}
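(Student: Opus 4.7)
The plan is to prove the contrapositive by contradiction: assume Condition S fails for $G$, exhibit a specific partition $L, C, R$ and a specific $F$ witnessing that Condition P also fails, and conclude.

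So suppose there exists $F \subseteq \scripte$ with $|F| \leq f$ and a link-reduced graph $G_F(\scriptv, \scripte_F)$ (as per Definition~\ref{def:reduced}) whose decomposition $G_F^d$ contains at least two source vertices (it cannot contain zero, since the decomposition of any finite directed graph into its strongly connected components is a DAG and every finite DAG has at least one source). Let $S_1$ and $S_2$ be two distinct source components in $G_F$, and set $L = S_1$, $R = S_2$, and $C = \scriptv - L - R$. Then $L$ and $R$ are non-empty and $\{L, C, R\}$ is a partition of $\scriptv$.

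The heart of the argument is translating the ``source component'' property back to the bound needed by Condition~P. Since $L$ is a source component of $G_F$, no node in $L$ has an incoming edge in $\scripte_F$ from any node outside $L$, i.e., from $C \cup R$. By the construction of $\scripte_F$, these missing edges from $C\cup R$ to a node $i \in L$ must have disappeared either in the first step (as members of $F$) or in the second step. Since the second step removes at most $f$ incoming edges per node, it follows that in $G' = (\scriptv, \scripte - F)$ each $i \in L$ has at most $f$ incoming edges from $C \cup R$, i.e., $|\{(j,i) : j \in C \cup R,~ (j,i) \in \scripte - F\}| \leq f$. Hence $C \cup R \not\Rightarrow L$ in $G'$. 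The identical argument applied to the source component $R = S_2$ gives $L \cup C \not\Rightarrow R$ in $G'$. But this directly contradicts Condition~P for the partition $L, C, R$ and the set $F$, completing the proof.

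The only step requiring any care is the second-paragraph bookkeeping that connects ``no incoming edges from outside $L$ in $G_F$'' to ``at most $f$ incoming edges from $C \cup R$ in $G - F$''; this hinges on the ``up to $f$ per node'' bound in Definition~\ref{def:reduced}, and is the one place where the choice of the link-reduced graph (and the freedom it affords) really matters. Once that translation is made, the contradiction with Condition~P is immediate.
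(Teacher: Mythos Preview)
Your proof is correct and follows essentially the same approach as the paper: assume a link-reduced graph $G_F$ has two source components, take them as $L$ and $R$ (with $C$ the rest), use the fact that source components have no incoming edges in $\scripte_F$ from outside, and then invoke the ``at most $f$ per node'' bound in Definition~\ref{def:reduced} to conclude that each node in $L$ (resp.\ $R$) has at most $f$ incoming edges from $C\cup R$ (resp.\ $L\cup C$) in $\scripte - F$, contradicting Condition~P. The bookkeeping you flag is exactly the key step the paper uses as well.
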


\begin{proof}
By assumption, 
$G$ contains at least two node, and so does $G_F$;
therefore, at least one
source component must exist in $G_F$. We now prove that $G_F$ cannot
contain more than one source component. The proof is by contradiction.
Suppose that there exists a subset $F\subset \scripte$ with $|F|\leq f$,
and the link-reduced graph 
$G_F(\scriptv,\scripte_F)$ corresponding to $F$ such
that the decomposition of $G_F$ includes at least two source components.

Let the sets of nodes in two such source components of $G_F$
be denoted $L$ and $R$, respectively. Let $C=\scriptv-L-R$.
Observe that $L,C,R$ form a partition of the nodes in $\scriptv$.
Since $L$ is a source component in $G_F$, it follows that
there are no directed links in $\scripte_F$ from any node in
$C\cup R$ to the nodes in $L$.
Similarly, since $R$ is a source component in $G_F$, it follows that
there are no directed links in $\scripte_F$ from any node in $L\cup C$ to
the nodes in $R$.
These observations, together with the manner in which $\scripte_F$
is defined, imply that (i) there are at most $f$ links in $\scripte - F$ from
the nodes in $C\cup R$ to each node in $L$, and
(ii) there are at most $f$ links in $\scripte - F$ from
the nodes in $L\cup C$ to each node in $R$.
Therefore, in graph $G' = (\scriptv,\scripte-F)$, $C\cup R\not\Rightarrow L$ and $L\cup C\not\Rightarrow R$. Thus, $G = (\scriptv, \scripte)$ does not satisfies {\em Condition P}, since $F \subseteq \scripte$ and $|F| \leq f$, a contradiction.
\fillbox
\end{proof}

\begin{lemma}
\label{lemma:s-to-p}
Suppose that {\em Condition S} holds for graph $G(\scriptv,\scripte)$.
Then, $G$ satisfies {\em Condition P}.
\end{lemma}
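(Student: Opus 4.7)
The plan is to argue by contradiction: assume \emph{Condition~S} holds but \emph{Condition~P} fails, and construct an explicit link-reduced graph $G_F$ with at least two source components, contradicting \emph{Condition~S}.

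Suppose \emph{Condition~P} fails. Then there is a partition $L, C, R$ of $\scriptv$ with $L$ and $R$ non-empty, and a set $F \subseteq \scripte$ with $|F| \le f$, such that in $G' = (\scriptv, \scripte - F)$ we have both $C \cup R \not\Rightarrow L$ and $L \cup C \not\Rightarrow R$. Unpacking the definition, this means that every node $i \in L$ has at most $f$ incoming edges from $C \cup R$ in $\scripte - F$, and symmetrically every node $j \in R$ has at most $f$ incoming edges from $L \cup C$ in $\scripte - F$.

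The key step is to exhibit a link-reduced graph $G_F(\scriptv, \scripte_F)$ witnessing a violation of \emph{Condition~S}. Starting from $\scripte - F$, at each node $i \in L$ I delete all (at most $f$) incoming edges from $C \cup R$; at each node $j \in R$ I delete all (at most $f$) incoming edges from $L \cup C$; at each node in $C$ I delete nothing. Since at most $f$ extra incoming edges are removed at every node, the result is a valid link-reduced graph as per Definition~\ref{def:reduced}. In $G_F$, no node of $L$ has an incoming edge from $C \cup R$, and no node of $R$ has an incoming edge from $L \cup C$.

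It remains to show $G_F$ has two source components. Consider the subgraph of $G_F$ induced on $L$ and take any source strongly connected component $L'$ of its decomposition (which exists since $L$ is non-empty). By construction, nodes in $L'$ receive no incoming edges from $C \cup R$ in $G_F$, and by choice of $L'$ they receive none from $L \setminus L'$ either; hence $L'$ is a source component of the full graph $G_F$. An identical argument produces a source component $R' \subseteq R$ of $G_F$. Since $L \cap R = \emptyset$, $L'$ and $R'$ are distinct, so $G_F$ has at least two source components, contradicting \emph{Condition~S}. The main obstacle I anticipate is purely bookkeeping, namely verifying that the deletions respect the per-node budget of $f$ so that $G_F$ is genuinely a link-reduced graph, but this follows immediately from the two non-arrow hypotheses. \fillbox
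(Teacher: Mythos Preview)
Your proof is correct and follows essentially the same approach as the paper: both argue by contradiction, construct the same link-reduced graph $G_F$ by deleting the cross-edges into $L$ and into $R$, and conclude that $G_F$ has at least two source components. Your final step is slightly more explicit (picking a source SCC $L'$ of the induced subgraph on $L$ rather than just noting there are no $L$--$R$ paths), but the argument is the same in substance.
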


\begin{proof}
The proof is by contradiction. Suppose that {\em Condition P} does not hold for graph $G = (\scriptv,\scripte)$. Thus, there exist a subset $F \subset \scripte$, where $|F| \leq f$, and a node partition $L,C,R$, where $L$ and $R$ are both non-empty, such that $C\cup R\not\Rightarrow L$ and $L\cup C\not\Rightarrow R$ in $G' = (\scriptv, \scripte-F)$.

We now constructed a link-reduced graph $G_F(\scriptv, \scripte_F)$ corresponding to set $F$. First, remove all links in $F$ from $\scripte$. Then since $C\cup R\not\Rightarrow L$, the number of links at each node in $L$ from nodes in $C \cup R$ is at most $f$; remove all these links. Similarly, for every node $j \in R$, remove all links from nodes in $L \cup C$ to $j$ (recall that by assumption, there are at most $f$ such links). The remaining links form the set $\scripte_F$. It should be obvious that $G_F(\scriptv, \scripte_F)$ satisfies Definition \ref{def:reduced}; hence, $G_F$ is a valid link-reduced graph.

Now, observe that by construction, in the link-reduced graph $G_F(\scriptv, \scripte_F)$ , there are no incoming links to nodes in $R$ from nodes in $L\cup C$; similarly, in $G_F$, there are no incoming links to nodes in $L$ from nodes in $C\cup R$. It follows that for each $i \in L$, there is no path using links in $\scripte_F$ from $i$ to nodes in $R$; similarly, for each $j \in R$, there is no path using links in $\scripte_F$ from $j$ to nodes in $L$. Thus, $G_F$ must contain at least two source components. Therefore, the existence of $G_F$ implies that $G$ violates {\em Condition S}, a contradiction.
\fillbox
\end{proof}

~

Lemmas \ref{lemma:p-to-s} and \ref{lemma:s-to-p} imply that {\em Condition P} is equivalent to {\em Condition S}. An alternate interpretation of {\em Condition S} is that in every link-reduced graph $G_F$, non-fault-tolerant iterative consensus must be possible.

\subsection{Useful Properties}
\label{s:useful}

Suppose $G(\scriptv, \scripte)$ satisfies {\em Condition P} and {\em Condition S}. We provide two lemmas below to state some properties of $G(\scriptv, \scripte)$ that are useful for analyzing the iterative algorithm presented later. Lemma \ref{lemma:path} intuitively states that at least one node can propagate its value to all the other nodes (over enough number of iterations). Lemma \ref{lemma:2f+1} states that each node needs to have enough incoming neighbors for achieving approximate consensus. 

\begin{lemma}
\label{lemma:path}
Suppose that graph $G(\scriptv, \scripte)$ satisfies {\em Condition S}. Then, in any link-reduced graph $G_F(\scriptv, \scripte_F)$, there exists a node that has a directed path to all the other nodes in $\scriptv$.
\end{lemma}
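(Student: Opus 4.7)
The plan is to leverage Condition~S, which says that any link-reduced graph $G_F$ has exactly one source component, and then show that any node in this unique source component can reach every node in $\scriptv$ via links in $\scripte_F$.

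First, I would apply Definition~\ref{def:decompose} to $G_F(\scriptv, \scripte_F)$ to obtain its decomposition $G_F^d$, which is a directed acyclic graph whose vertices correspond to the strongly connected components of $G_F$. By the definition of a source component, an SCC $H_k$ is a source component of $G_F$ exactly when its representative vertex $c_k$ has no incoming edges in $G_F^d$. By Condition~S there is a unique such vertex; call the corresponding source component $S^*$, and pick any node $u \in S^*$.

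Next, I would establish the key intermediate fact: in the DAG $G_F^d$, every vertex is reachable from the unique source vertex $c^*$ corresponding to $S^*$. The argument is a routine backward walk: starting from any vertex $c$, repeatedly choose an incoming edge (which must exist whenever $c$ is not a source of the DAG). Since $G_F^d$ is finite and acyclic, the walk cannot revisit any vertex and must terminate at a vertex with no incoming edges; by uniqueness that vertex is $c^*$, so reversing the walk gives a directed path in $G_F^d$ from $c^*$ to $c$. Lifting such a path back to $G_F$ yields, for every SCC $H_k$, a directed path in $G_F$ from some node in $S^*$ to some node in $H_k$.

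Finally, I would combine this reachability between components with the internal strong connectivity of each component: since $S^*$ is strongly connected, $u$ has a directed path in $G_F$ to the endpoint of the lifted inter-component path inside $S^*$; following that path reaches an entry node inside the target SCC $H_k$; and since $H_k$ is strongly connected, there is a directed path from that entry node to any node of $H_k$. Concatenating these three pieces gives a directed path from $u$ to every node in $\scriptv$, which is the desired conclusion. The only place that requires any care is the backward-walk argument, and even there the finiteness and acyclicity of $G_F^d$ make it essentially immediate; no significant obstacle is anticipated.
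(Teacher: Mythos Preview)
Your proposal is correct and follows essentially the same approach as the paper: pick any node in the unique source component guaranteed by Condition~S, and argue via the DAG decomposition and strong connectivity that this node reaches every other node. The paper's own proof is terser, simply asserting that non-source components are ``reachable from the source component'' by uniqueness, whereas you spell out the backward-walk justification explicitly; the added detail is fine but not required.
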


\begin{proof}
Recall that {\em Condition S} states that any link-reduced graph $G_F(\scriptv, \scripte_F)$ has a single source component. By the definition of source component, any node in the source component (say node $s$) has directed paths using edges in $\scripte_F$ to all the other nodes in the source component, since the source component is a strongly connected component. Also, by the uniqueness of the source component, all other strongly connected components in $G_F$ (if any exist) are not source components, and hence reachable from the source component using the edges in $\scripte_F$. Therefore, node $s$ also has directed paths to all the nodes in $\scriptv$ that are not in the source component as well. Therefore, node $s$ has directed paths to all the other nodes in $\scriptv$. This proves the lemma.
\fillbox
\end{proof}

\begin{lemma}
\label{lemma:2f+1}
For $f > 0$, if graph $G=(\scriptv, \scripte)$ satisfies {\em Condition P}, then each node in $\scriptv$ has in-degree at least $2f+1$, i.e., for each $i \in \scriptv, |N_i^-| \geq 2f+1$.
\end{lemma}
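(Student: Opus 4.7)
The plan is to prove the contrapositive by a direct witness construction: assuming some node $i$ has in-degree at most $2f$, I would exhibit a partition $L, C, R$ and a fault set $F$ with $|F|\le f$ that violate Condition P.

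The natural choice is $L=\{i\}$, $C=\emptyset$, and $R=\scriptv-\{i\}$. Both $L$ and $R$ are non-empty because $n\ge 2$. For the fault set $F$, I would take $F$ to be any subset of the incoming edges $E_i^-$ of size $\min(f,|E_i^-|)$; clearly $|F|\le f$, so $F$ is a legitimate choice.

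First I would show $C\cup R\not\Rightarrow L$ in $G'=(\scriptv,\scripte-F)$. After removing the chosen $F$, the number of incoming edges to the unique node $i\in L$ from $C\cup R=\scriptv-\{i\}$ is $|E_i^-|-|F|$. If $|E_i^-|\le f$, this is $0$; if $f<|E_i^-|\le 2f$, this is at most $2f-f=f$. In either case it is strictly less than $f+1$, so no node in $L$ has $f+1$ incoming edges from $C\cup R$ in $G'$, i.e., $C\cup R\not\Rightarrow L$. Next I would show $L\cup C\not\Rightarrow R$. Since $|L\cup C|=|\{i\}|=1$, any node $j\in R$ receives at most one incoming edge from $L\cup C$. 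Using the hypothesis $f\ge 1$, we have $f+1\ge 2>1$, so no node in $R$ can have $f+1$ incoming edges from $L\cup C$, giving $L\cup C\not\Rightarrow R$. Together these contradict Condition P, completing the proof.

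I do not expect any serious obstacle here; the argument is essentially bookkeeping. The only point that needs attention is handling the case $|E_i^-|<f$ (so $F=E_i^-$ is strictly smaller than $f$), which is why I defined $|F|=\min(f,|E_i^-|)$ rather than $|F|=f$. The role of the hypothesis $f>0$ is visible only in the second half of the argument, where we need $f+1>1$ to rule out $L\cup C\Rightarrow R$; without $f>0$, a singleton $L\cup C$ could trivially $\Rightarrow$ any non-empty $R$.
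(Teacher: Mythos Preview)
Your proposal is correct and essentially identical to the paper's own proof: the same partition $L=\{i\}$, $C=\emptyset$, $R=\scriptv-\{i\}$, the same fault set $F\subseteq E_i^-$ of size $\min(f,|E_i^-|)$, and the same two verifications of $C\cup R\not\Rightarrow L$ and $L\cup C\not\Rightarrow R$. Your handling of the edge case $|E_i^-|<f$ and your observation about where the hypothesis $f>0$ is used are exactly on target.
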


\begin{proof}
The proof is by contradiction. By assumption in the lemma, $f > 0$, and graph $G = (\scriptv, \scripte)$ satisfies {\em Condition P}.

Suppose that there exists a node $i \in \scriptv$ such that $|N_i^-| \leq 2f$. Define $L = \{i\}, C = \emptyset$, and $R = \scriptv - \{i\}$. Note that sets $L,C,R$ form a partition of $\scriptv$. Now, define an edge set $F$ such that $F \subseteq \scripte, |F| \leq f$, and $F$ contains $\min(f, |N_i^-|)$ incoming links from nodes in $R$ to node $i$.

Observe that $f > 0$, and $|L \cup C| = 1$. Thus,  there can be at most $1$ link from $L \cup C$ to any node in $R$ in $G' = (\scriptv,\scripte-F)$. Therefore, $L\cup C \not\Rightarrow R$ in $G' = (\scriptv,\scripte-F)$. 
Then, recall that $E_i^-$ is the set of all the node $i$'s incoming links. Since $L = \{i\}$ and $C = \emptyset$, $E_i^- = \{(j,i)~|~j \in R\}$. Also, since $|E_i^-| = |N_i^-| \leq 2f$, and $F$ contains $\min(f, |N_i^-|)$ links in $E_i^-$, $|E_i^- - F| \leq 2f - f = f$. Therefore, $C\cup R \not\Rightarrow L$ in $G(\scriptv,\scripte-F)$.
Thus, $G' = (\scriptv,\scripte)$ does not satisfy {\em Condition P}, a contradiction. 
\fillbox
\end{proof}

\section{Algorithm 1}
\label{s:algorithm1}

We will prove that there exists a correct IABC algorithm – particularly Algorithm 1 below – that satisfies the termination, validity and
$\epsilon$-agreement properties provided that the graph $G(\scriptv, \scripte)$ satisfies {\em Condition S}. This implies that
{\em Condition P} and {\em Condition S} ares also sufficient. Algorithm 1 has the iterative structure described in Section \ref{sec:iacbl}, and
it is similar to algorithms that were analyzed in prior work as
well \cite{vaidya_PODC12,Tseng_general} (although correctness of the algorithm under
the necessary condition ({\em Conditions P} and {\em S}) has not been proved previously).

~

\hrule
{\bf Algorithm 1}
\vspace*{4pt}\hrule

\begin{enumerate}

\item {\em Transmit step:} Transmit current state $v_i[t-1]$ on all outgoing edges and self-loop.

\item {\em Receive step:} Receive values on all incoming edges and self-loop. These values form vector $r_i[t]$ of size $|N_i^-|+1$ (including the value from node $i$ itself). When a node expects to receive a message from an incoming neighbor but does not receive the message, the message value is assumed to be equal to its own state, i.e., $v_i[t-1]$.

\item {\em Update step:}
Sort the values in $r_i[t]$ in an increasing order (breaking ties arbitrarily), and eliminate the smallest and largest $f$ values. 
Let $N_i^*[t]$ denote the set of nodes from whom the remaining $|N_i^-| +1 - 2f$ values in $r_i[t]$ were received. Note that as proved in Lemma \ref{lemma:2f+1}, each node has at least $2f+1$ incoming neighbors. Thus, when $f > 0$, $|N_i^*[t]| \geq 2$.
Let $w_j$ denote the value received from node $j \in N_i^*[t]$. Note that $i \in N_i^*[t]$. Hence, for convenience, define $w_i=v_i[t-1]$ to be the value node $i$ receives from itself.  Observe that
if the link from $j\in N_i^*[t]$ is fault-free, then $w_j=v_j[t-1]$.

Define
\begin{eqnarray}
v_i[t] ~ = ~ T_i(r_i[t]) ~ = ~\sum_{j\in N_i^*[t]} a_i \, w_j
\label{e_T}
\end{eqnarray}
where
\[ a_i = \frac{1}{|N_i^*[t]|} = \frac{1}{|N_i^-|+1-2f}
\] 

The ``weight'' of each term on the right-hand side of
(\ref{e_T}) is $a_i$. Note that $|N_i^*[t]| = |N_i^-|+1 - 2f$, and $i\not\in N_i^*[t]$ because $(i,i)\not\in\scripte$. Thus, the weights on the right-hand side add to 1. Also, $0<a_i\leq 1$.\footnote{Although $f$ and $a_i$ may be different for each iteration $t$, for simplicity, we do not explicitly represent this dependence on $t$ in the notations.} 


\end{enumerate}

\noindent {\bf Termination}: Each node terminates after completing iteration $t_{end}$, where $t_{end}$ is a constant defined later in Equation (\ref{eq:tend}). The value of $t_{end}$ depends on graph $G(\scriptv, \scripte)$, constants $U$ and $\mu$ defined earlier in Section \ref{sec:iacbl} and parameter $\epsilon$ in $\epsilon$-agreement property.

\hrule

\section{Sufficiency (Correctness of Algorithm 1)}
\label{s:sufficiency}

We will prove that given a graph $G(\scriptv, \scripte)$ satisfying {\em Condition S}, Algorithm 1 is correct, i.e., Algorithm 1 satisfies {\em termination, validity, $\epsilon$-agreement} properties. Therefore, {\em Condition S} and {\em Condition P} are proved to be sufficient. We borrow the matrix analysis from the work on non-fault-tolerant consensus \cite{Jadbabaie,AA_convergence_markov}. The proof below follows the same structure in our prior work on node failures \cite{Tseng_general,vaidya_icdcn14}; however, such analysis has not been applied in the case of link failures.

In the rest of the section, we assume that $G(\scriptv,\scripte)$ satisfies {\em Condition S}  and {\em Condition P}. We introduce standard matrix tools to facilitate our proof. Then, we use transition matrix to represent the {\em Update} step in Algorithm 1, and show how to use these tools to prove the correctness of Algorithm 1 in $G(\scriptv,\scripte)$.

\subsection{Matrix Preliminaries}

In the discussion below, we use boldface upper case letters to denote matrices,
rows of matrices, and their elements. For instance,
$\bfA$ denotes a matrix, $\bfA_i$ denotes the $i$-th row of
matrix $\bfA$, and $\bfA_{ij}$ denotes the element at the
intersection of the $i$-th row and the $j$-th column
of matrix $\bfA$.

\begin{definition}
\label{d_stochastic}
A vector is said to be {\em stochastic} if all the elements
of the vector are non-negative, and the elements add up to 1.
A matrix is said to be {\em row stochastic} if each row of the matrix is a
stochastic vector. 
\end{definition}

When presenting matrix products, for convenience of presentation, we adopt the ``backward'' product convention below, where $a \leq b$,

\begin{equation}
\label{backward}
\Pi_{i=a}^b \bfA[i] = \bfA[b]\bfA[b-1]\cdots\bfA[a]
\end{equation}

For a row stochastic matrix $\bfA$,
 coefficients of ergodicity $\delta(\bfA)$ and $\lambda(\bfA)$ are defined as
follows \cite{Wolfowitz}:
\begin{eqnarray*}
\delta(\bfA) & = &   \max_j ~ \max_{i_1,i_2}~ | \bfA_{i_1\,j}-\bfA_{i_2\,j} | \label{e_zelta} \\
\lambda(\bfA) & = & 1 - \min_{i_1,i_2} \sum_j \min(\bfA_{i_1\,j} ~, \bfA_{i_2\,j}) \label{e_lambda}
\end{eqnarray*}

\begin{lemma}
\label{lemma:ergodicity}
For any $p$ square row stochastic matrices $\bfA(1), \bfA(2), \dots, \bfA(p)$,

\begin{equation*}
\delta(\Pi_{u=1}^p \bfA(u)) \leq \Pi_{u=1}^p \lambda(\bfA(u))
\end{equation*}
\end{lemma}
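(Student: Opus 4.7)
\medskip

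\noindent\textbf{Proof proposal.} My plan is to reduce the lemma to a single sub-multiplicative identity and then iterate. The key sub-claim I would try to establish is
\begin{equation}
\label{eq:submult}
\delta(\bfA\bfM) \;\leq\; \lambda(\bfA)\,\delta(\bfM)
\end{equation}
for any row stochastic square matrices $\bfA,\bfM$ of matching size, together with the base fact
\begin{equation}
\label{eq:basecase}
\delta(\bfA) \;\leq\; \lambda(\bfA)
\end{equation}
for a single row stochastic matrix $\bfA$. Given these two facts, the lemma follows by induction on $p$ using the backward convention~(\ref{backward}): write $\Pi_{u=1}^p \bfA(u) = \bfA(p)\cdot\bigl(\Pi_{u=1}^{p-1}\bfA(u)\bigr)$, note that the product of row stochastic matrices is row stochastic, apply~(\ref{eq:submult}) with $\bfA=\bfA(p)$ and $\bfM=\Pi_{u=1}^{p-1}\bfA(u)$, and then apply the inductive hypothesis. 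The base case $p=1$ is exactly~(\ref{eq:basecase}).

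For~(\ref{eq:basecase}), the plan is to fix the column $j^*$ and rows $i_1,i_2$ that attain $\delta(\bfA)$, say with $\bfA_{i_1 j^*}\geq \bfA_{i_2 j^*}$. Because both rows are stochastic, $\sum_j (\bfA_{i_1 j}-\bfA_{i_2 j}) = 0$, so the positive part at $j^*$ alone forces $\sum_j (\bfA_{i_1 j}-\bfA_{i_2 j})^+\geq \delta(\bfA)$. Rewriting this sum as $1-\sum_j \min(\bfA_{i_1 j},\bfA_{i_2 j})$ and taking the max over $i_1,i_2$ gives $\lambda(\bfA)\geq \delta(\bfA)$.

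The sub-multiplicative step~(\ref{eq:submult}) is where the real work lies and will be the main obstacle. I would fix arbitrary rows $i_1,i_2$ and a column $j$, set $\alpha_k = \bfA_{i_1 k}-\bfA_{i_2 k}$, and observe that $\sum_k \alpha_k = 0$ so the positive and negative parts satisfy $\sum_{k:\alpha_k>0}\alpha_k = -\sum_{k:\alpha_k<0}\alpha_k =: \beta(i_1,i_2)$, and by the same rewriting as above, $\beta(i_1,i_2)\leq \lambda(\bfA)$. Then
\[
(\bfA\bfM)_{i_1 j}-(\bfA\bfM)_{i_2 j} \;=\; \sum_k \alpha_k \bfM_{kj} \;=\; \sum_{k:\alpha_k>0}\alpha_k \bfM_{kj} - \sum_{k:\alpha_k<0}(-\alpha_k)\bfM_{kj}.
\]
Bounding the first sum by replacing $\bfM_{kj}$ with $\max_k \bfM_{kj}$ and the second by $\min_k \bfM_{kj}$ (and symmetrically for a lower bound) gives
\[
\bigl|(\bfA\bfM)_{i_1 j}-(\bfA\bfM)_{i_2 j}\bigr| \;\leq\; \beta(i_1,i_2)\bigl(\max_k \bfM_{kj}-\min_k \bfM_{kj}\bigr) \;\leq\; \lambda(\bfA)\,\delta(\bfM).
\]
Taking the max over $j,i_1,i_2$ yields~(\ref{eq:submult}).

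The delicate bookkeeping is in the positive/negative part split and making sure the inequalities go the right way in both directions (upper and lower bound on the signed difference), so that the absolute value is controlled; I expect that to be the only nontrivial step, and everything else is a clean induction on $p$.
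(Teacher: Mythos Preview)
Your argument is correct: the base case $\delta(\bfA)\leq\lambda(\bfA)$ follows exactly from the positive/negative-part rewriting you give, and the sub-multiplicative bound $\delta(\bfA\bfM)\leq\lambda(\bfA)\,\delta(\bfM)$ is established cleanly by the signed-difference split, with $\beta(i_1,i_2)=1-\sum_k\min(\bfA_{i_1 k},\bfA_{i_2 k})\leq\lambda(\bfA)$ doing the work. The induction on $p$ then goes through with no further issues.

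As for comparison with the paper: there is nothing to compare. The paper does not prove Lemma~\ref{lemma:ergodicity} at all; it simply states the result and cites Hajnal~\cite{Hajnal58}. What you have written is essentially the classical argument from that reference (and from later treatments such as Wolfowitz~\cite{Wolfowitz} and Seneta's monograph), so you have supplied the proof the paper chose to outsource. If anything, your write-up is slightly more explicit than the standard presentation in isolating~(\ref{eq:submult}) and~(\ref{eq:basecase}) as separate claims before inducting.
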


Lemma \ref{lemma:ergodicity} is proved in \cite{Hajnal58}. Lemma \ref{lemma:ergodicity2} below follows from the definition of $\lambda(\cdot)$.

\begin{lemma}
\label{lemma:ergodicity2}
If all the elements in any one column of matrix \bfA are lower bounded by a constant $\gamma$, then $\lambda(\bfA) \leq 1 - \gamma$. That is, if $\exists g$, such that $\bfA_{ig} \geq \gamma \forall i$, then $\lambda(\bfA) \leq 1 - \gamma$.
\end{lemma}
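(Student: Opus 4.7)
The plan is to unfold the definition of $\lambda(\bfA)$ directly and exploit the column lower bound together with the non-negativity of stochastic matrix entries. By definition,
\[
\lambda(\bfA) \;=\; 1 - \min_{i_1,i_2} \sum_j \min(\bfA_{i_1 j}, \bfA_{i_2 j}),
\]
so proving $\lambda(\bfA) \le 1 - \gamma$ is equivalent to showing
\[
\min_{i_1,i_2} \sum_j \min(\bfA_{i_1 j}, \bfA_{i_2 j}) \;\ge\; \gamma.
\]

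First I would fix an arbitrary pair of row indices $i_1, i_2$. Using the hypothesis that $\bfA_{ig} \ge \gamma$ for every row $i$, the single term corresponding to column $g$ satisfies $\min(\bfA_{i_1 g}, \bfA_{i_2 g}) \ge \gamma$. Since $\bfA$ is row stochastic, every entry is non-negative, so each of the remaining terms $\min(\bfA_{i_1 j}, \bfA_{i_2 j})$ for $j \neq g$ is non-negative. Summing over $j$, the column-$g$ term alone already gives $\sum_j \min(\bfA_{i_1 j}, \bfA_{i_2 j}) \ge \gamma$, and this bound holds uniformly in $i_1, i_2$.

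Taking the minimum over $i_1, i_2$ on the left preserves the inequality, yielding the desired bound and hence $\lambda(\bfA) \le 1 - \gamma$. There is no real obstacle here: the lemma is essentially a one-line consequence of the definition of $\lambda(\cdot)$ combined with non-negativity. The only care needed is to remember that $\bfA$'s row-stochasticity (assumed in the ambient definition of $\lambda$) guarantees the non-negativity of all off-column contributions, so dropping them can only decrease the sum.
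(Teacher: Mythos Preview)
Your proof is correct and matches the paper's approach: the paper simply asserts that the lemma ``follows from the definition of $\lambda(\cdot)$,'' and what you have written is precisely the one-line unfolding of that definition together with non-negativity of the entries. There is nothing to add.
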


It is easy to show that  $0\leq \delta(\bfA) \leq 1$ and $0\leq \lambda(\bfA) \leq 1$, and that the rows
of $\bfA$ are all identical iff $\delta(\bfA)=0$. Also, $\lambda(\bfA) = 0$ iff $\delta(\bfA) = 0$.

\subsection{Correctness of Algorithm 1}
Denote by $v[0]$ the column vector consisting of the initial states at all nodes. The $i$-th element of $v[0]$, $v_i[0]$, is the initial state of node $i$. Denote by $v[t]$, for $t \geq 1$, the column vector consisting of the states of all nodes at the end
of the $t$-th iteration. The $i$-th element of vector $v[t]$ is state $v_i[t]$. 

For $t \geq 1$, define $F[t]$ to be the set of all links behaving faulty in iteration $t$. Recall that link $(j,i)$ is said to be faulty in iteration $t$ if the value received by node $i$ is different from what node $j$ sends in iteration $t$. Then, define $N_i^F$ as the set of all nodes whose outgoing links to node $i$ is faulty in iteration $t$, i.e., $N_i^F = \{j~|~j \in N_i^-,~(j,i) \in F[t]\}$.\footnote{$N_i^F$ may be different for each iteration $t$. For simplicity, the notation does not explicitly represent this dependence.}

Define $N_i^r$ as a subset of incoming neighbors at node $i$ of size at most $f$, i.e.,\footnote{As will be seen later, $N_i^r$ corresponds to the links removed in some link-reduced graph. Thus, the superscript $r$ in the notation stands for ``removed." $N_i^r$ may be different for each $t$. For simplicity, the notation does not explicitly represent this dependence.}

\[
N_i^r \subseteq N_i^-~~~~\text{such that}~~|N_i^r| \leq f
\]

Now, we state the key lemma that helps prove the correctness of Algorithm 1. In particular, Lemma \ref{lemma:tm2cm} allows us to use results for non-homogeneous Markov chains to prove the correctness of Algorithm 1.  The proof is presented in Appendix \ref{a:tm2cm}.

\begin{lemma}
\label{lemma:tm2cm}
The {\em Update} step in iteration $t~(t \geq 1)$ of Algorithm 1 at the nodes can be expressed as 

\begin{equation}
\label{matrix:e_T}
v[t] = \matrixm[t] v[t-1]
\end{equation}
where {\normalfont$\matrixm[t]$} is an $n \times n$ row stochastic transition matrix with the following property: there exist a constant $\beta~(0 < \beta \leq 1)$ that depends only on graph $G(\scriptv, \scripte)$, and $N_i^r$ such that for each $i \in \scriptv$, and  for all $j \in \{i\}\cup(N_i^- - N_i^F - N_i^r)$, 

\[
\matrixm_{ij}[t] \geq \beta
\]
\end{lemma}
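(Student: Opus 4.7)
My plan is to build the row-stochastic matrix $\matrixm[t]$ row by row, unrolling the Update step at each node $i$ and using the standard Byzantine absorption trick to handle the tampered values delivered by the (at most $f$) faulty links in $N_i^F$. Fix $i$ and $t$. The Update step gives $v_i[t] = a_i \sum_{j \in N_i^*[t]} w_j$ with $a_i = 1/(|N_i^-|+1-2f) \geq 1/n$. For every fault-free incoming neighbor and for the self-loop we have $w_j = v_j[t-1]$, so each such term contributes weight $a_i$ directly to the $(i,j)$ entry of $\matrixm[t]$; in particular, since $i \in N_i^*[t]$ by the algorithm and $i \notin N_i^F$, we immediately get $\matrixm_{ii}[t] \geq a_i$.

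The core step is rewriting the faulty-survivor terms $j \in N_i^*[t] \cap N_i^F$ via a sandwich argument. Let $S$ and $L$ denote the $f$ smallest and $f$ largest values trimmed in the Update step, and set $S^{ff} = S - N_i^F$, $L^{ff} = L - N_i^F$. Because $|N_i^F| \leq f$ and $q := |N_i^*[t] \cap N_i^F|$ of the faulty positions survive, at most $f - q$ faulty values can occupy the $2f$ trimmed slots, so $|S^{ff}|, |L^{ff}| \geq q$. The sort order then ensures $w_\alpha \leq w_j \leq w_\beta$ for every $\alpha \in S^{ff}, \beta \in L^{ff}$ and every faulty surviving $w_j$, so each such $w_j$ can be written as $\theta\, v_\alpha[t-1] + (1-\theta)\, v_\beta[t-1]$ for suitable fault-free anchors and $\theta \in [0,1]$. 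Substituting these identities into the expression for $v_i[t]$ and regrouping yields row $i$ of $\matrixm[t]$ as a non-negative combination of $v_j[t-1]$'s whose coefficients sum to $a_i \cdot |N_i^*[t]| = 1$, giving the row-stochastic property.

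To extract the lower-bound property I will tie $N_i^r$ to the ``additional links removed at node $i$'' in a suitable link-reduced graph $G_F$ (cf. Definition \ref{def:reduced} and the footnote introducing $N_i^r$), so that $N_i^- - N_i^F - N_i^r$ coincides with the fault-free incoming neighbors of $i$ that survive in $G_F$. The anchors above will be chosen so that, after absorption, every $j \in \{i\} \cup (N_i^- - N_i^F - N_i^r)$ either lies in $N_i^*[t] - N_i^F$ and thus receives coefficient $\geq a_i$, or appears as an anchor whose coefficient is bounded below by a positive constant that depends only on $n$ and $f$; any fault-free neighbor falling short of this bound is absorbed into $N_i^r$. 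The uniform $\beta$ then follows from a finite case analysis and the $a_i \geq 1/n$ lower bound.

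The step I expect to be the main obstacle is controlling $|N_i^r| \leq f$ and the positivity of $\beta$ simultaneously. Naively pairing each of the $q$ faulty survivors with a single distinct pair in $S^{ff} \times L^{ff}$ leaves up to $2(f-q)$ fault-free trimmed neighbors with zero coefficient, which can exceed $f$ when $q$ is small (for instance when the adversary uses fewer than $f$ faulty links or when most faulty values end up trimmed). My plan for overcoming this is twofold: first, pad $N_i^r$ with up to $f - |N_i^F|$ strategically chosen fault-free incoming neighbors (matching the ``extra'' link removals allowed in Definition \ref{def:reduced}); second, spread the absorption weight across multiple anchor pairs so that each anchor left outside $N_i^r$ picks up a coefficient uniformly bounded below. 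Showing that this spreading can always be done without violating row-stochasticity is the central combinatorial content of the lemma, and it is precisely what justifies the link-reduced-graph bookkeeping used in the subsequent matrix-product convergence analysis.
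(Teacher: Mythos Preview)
Your approach is the paper's: build $\matrixm[t]$ row by row and rewrite each surviving faulty value $w_k$ as a convex combination of fault-free trimmed values. The obstacle you flag is real for the single-pair version, but the paper dissolves it more simply than your two-part fix suggests. Instead of anchoring each $w_k$ at one pair in $S^{ff}\times L^{ff}$, it anchors at the \emph{averages} $m_{\scripts}=|S^{ff}|^{-1}\sum_{j\in S^{ff}}v_j[t-1]$ and $m_{\scriptl}$, so every node in $S^{ff}$ automatically receives the same share of the absorbed mass (and likewise for $L^{ff}$). Since the total convex weights satisfy $\sum_k S_k+\sum_k L_k=q$, one of the two sides carries at least $q/2$; take that side (say $S^{ff}$) as the ``kept'' anchors and set $N_i^r=L^{ff}$. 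Then $|N_i^r|\le|\scriptl|=f$ for free, every $j\in S^{ff}$ gets coefficient at least $a_i/(2|S^{ff}|)\ge a_i/(2f)$, and no padding or strategic anchor selection is needed.

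There is, however, a genuine gap in your plan: the case $q=0$, where no faulty value survives trimming (in particular the common case $N_i^F=\emptyset$). Here there is nothing to absorb, so both $S^{ff}$ and $L^{ff}$ receive zero coefficient, yet together they contain $2f-|N_i^F|$ fault-free neighbors, which exceeds $f$ whenever $|N_i^F|<f$. Neither of your remedies applies: padding $N_i^r$ cannot cover more than $f$ zeros, and ``spreading absorption weight'' is vacuous when the weight to be spread is zero. The paper's fix (its Case~II) is to \emph{manufacture} something to sandwich: split node $i$'s self-term $a_i v_i[t-1]$ in half, keep $\tfrac{a_i}{2}$ as $\matrixm_{ii}[t]$, and rewrite the other $\tfrac{a_i}{2}v_i[t-1]$ as a convex combination of $m_{\scripts}$ and $m_{\scriptl}$ (legitimate because $v_i[t-1]$ survived trimming, hence lies in $[m_{\scripts},m_{\scriptl}]$). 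This pushes positive weight onto one of $S^{ff},L^{ff}$ and lets $N_i^r$ be the other, exactly as in the $q\ge 1$ case. You will need this device, or an equivalent one, to close the argument.
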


~

Matrix $\matrixm[t]$ is said to be a \underline{transition matrix} for iteration $t$. Aa the lemmas states, $\matrixm[t]$ is a row stochastic matrix. The proof of Lemma \ref{lemma:tm2cm} shows how to construct a suitable row stochastic matrix $\matrixm[t]$ for each iteration $t$. $\matrixm[t]$ depends not only on $t$ but also on the behavior of the faulty links in iteration $t$. 

\begin{theorem}
\label{thm:correctness}
Algorithm 1 satisfies the {\em Termination, Validity}, and {\em $\epsilon$-agreement} properties.
\end{theorem}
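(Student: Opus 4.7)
}
The plan is to reduce everything to a product of row stochastic transition matrices via Lemma~\ref{lemma:tm2cm}, and then bound that product using the coefficients of ergodicity introduced in Section~\ref{s:useful}. Termination will be immediate once I pin down a finite constant $t_{end}$ at the end of the analysis. Validity is essentially free from Lemma~\ref{lemma:tm2cm}: each $v_i[t]$ is a convex combination of entries of $v[t-1]$ (the row of $\matrixm[t]$ corresponding to $i$ is stochastic), so $\min_k v_k[t-1]\leq v_i[t]\leq \max_k v_k[t-1]$, which gives both inequalities required in the validity property.

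The heart of the argument is $\epsilon$-agreement. Unrolling the recursion gives $v[t] = \left(\Pi_{s=1}^{t}\matrixm[s]\right) v[0]$ in the backward product convention of~(\ref{backward}), and a standard row-stochastic calculation shows that $\max_i v_i[t] - \min_i v_i[t]$ is bounded by $\delta\!\left(\Pi_{s=1}^t \matrixm[s]\right)\cdot (U-\mu)$, using the a priori input bound. Hence it suffices to drive this $\delta$ below $\epsilon/(U-\mu)$. Lemma~\ref{lemma:ergodicity} translates the $\delta$ of a long matrix product into a product of $\lambda$'s, so I will group the iterations into consecutive windows of a fixed length $\nu$ and upper bound $\lambda$ of the window product by a constant strictly less than~$1$. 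Multiplying these window bounds then yields geometric decay of $\delta$, and I will set
\[
t_{end} \;=\; \nu\cdot\left\lceil\frac{\log\frac{\epsilon}{U-\mu}}{\log(1-\beta^{\nu})}\right\rceil.
\]

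The main obstacle is showing that $\lambda$ of a window product is uniformly bounded away from~$1$, despite the link-reduced graph associated with $\matrixm[t]$ possibly varying at every iteration. I will handle this by taking $\nu = n\cdot|\mathcal{R}|$, where $\mathcal{R}$ is the finite collection of all link-reduced graphs of $G$ permitted by Definition~\ref{def:reduced}. By Lemma~\ref{lemma:tm2cm}, each $\matrixm[t]$ realizes some $R[t]\in\mathcal{R}$ in the sense that entries indexed by edges of $R[t]$, as well as all diagonal entries (self-loops), are at least~$\beta$. By pigeon-hole, within any $\nu$ consecutive iterations some $R^*\in\mathcal{R}$ recurs in at least $n$ of them; Lemma~\ref{lemma:path} then furnishes a source node $s^*$ with directed paths of length at most $n-1$ in $R^*$ to every other node of $\scriptv$. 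Composing such a path through the window, using an edge of $R^*$ at each iteration that realizes $R^*$ and a self-loop (also weight $\geq \beta$) at the remaining iterations, shows that the $s^*$-th column of the window product has every entry at least $\beta^{\nu}$. Lemma~\ref{lemma:ergodicity2} then gives $\lambda\leq 1-\beta^{\nu}$, uniformly over windows, which combined with Lemma~\ref{lemma:ergodicity} completes $\epsilon$-agreement. The subtle point in this final step is that $R[t]$ changes across $t$, so one must be careful to always use the self-loop entry (guaranteed by Lemma~\ref{lemma:tm2cm} since $i\in\{i\}\cup(N_i^--N_i^F-N_i^r)$ trivially) to ``hold'' the partial propagation through the iterations whose link-reduced graph differs from $R^*$.
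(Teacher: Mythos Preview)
Your plan is essentially the same as the paper's: window length $n$ times the number of link-reduced graphs, pigeonhole to find a recurring graph $R^*$, Lemma~\ref{lemma:path} to get a source node, self-loops to ``hold'' between occurrences of $R^*$, and then Lemma~\ref{lemma:ergodicity2} followed by Lemma~\ref{lemma:ergodicity} for geometric decay. The path-composition argument you sketch is exactly how the paper establishes the non-zero column (its Lemmas~\ref{lemma:non-zero}--\ref{l_Q}), and your handling of the time-varying $R[t]$ via the guaranteed diagonal entry $\matrixm_{ii}[t]\geq\beta$ is correct.

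There is one genuine slip. The inequality you call ``a standard row-stochastic calculation,'' namely $\max_i v_i[t]-\min_i v_i[t]\leq \delta\!\left(\Pi_s \matrixm[s]\right)(U-\mu)$, is false as stated: take two rows $(\tfrac13,\tfrac13,\tfrac13,0,0,0)$ and $(0,0,0,\tfrac13,\tfrac13,\tfrac13)$ and $v[0]=(1,1,1,0,0,0)^\top$; then $\delta=\tfrac13$ but the spread of $Av[0]$ equals $1$. The coefficient $\delta$ measures the maximum \emph{per-column} discrepancy, not the total-variation distance between rows; the range-contraction factor you want is $\lambda$, not $\delta$. You can repair this in either of two ways: (i) follow the paper and bound $|v_j[t]-v_k[t]|\leq \sum_i |\matrixm^*_{ji}-\matrixm^*_{ki}|\,|v_i[0]|\leq n\,\delta(\matrixm^*)\max(|U|,|\mu|)$, then drive $\delta$ below $\epsilon/(n\max(|U|,|\mu|))$; or (ii) use the correct contraction $\max_i(Ax)_i-\min_i(Ax)_i\leq \lambda(A)\bigl(\max_j x_j-\min_j x_j\bigr)$ directly on each window product, which gives the cleaner bound you intended and lets you keep your $t_{end}$ formula with $\lambda$ in place of $\delta$. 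Either fix leaves the structure of your argument intact.
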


\begin{proof}
Sections \ref{s:validity}, \ref{s:termination} and \ref{s:agreement} provide the proof that Algorithm 1 satisfies the three properties for iterative approximate consensus in the presence of Byzantine links. This proof follows a structure used to prove correctness
of other consensus algorithms in our prior work \cite{Tseng_general,vaidya_icdcn14}.
\fillbox
\end{proof}

\subsection{Validity Property}
\label{s:validity}

Observe that $\matrixm[t+1](\matrixm[t] v[t-1]) = (\matrixm[t+1]\matrixm[t]) v[t-1]$. Therefore, by repeated application of (\ref{matrix:e_T}), we obtain for $t \geq 1$,

\begin{equation}
\label{eq:matrixT}
v[t] = (\Pi_{u = 1}^t \matrixm[u]) v[0]
\end{equation}

Since each $\matrixm[u]$ is row stochastic as shown in Lemma \ref{lemma:tm2cm}, the matrix product $\Pi_{u=1}^t \matrixm[u]$ is also a row stochastic matrix. Thus, (\ref{eq:matrixT}) implies that the state of each node $i$ at the end of iteration $t$ can be expressed as a convex combination of the initial states at all the nodes. Therefore, the validity property is satisfied.

\subsection{Termination Property}
\label{s:termination}

Algorithm 1 terminates after $t_{end}$ iterations, where $t_{end}$ is a finite constant depending only on $G(\scriptv, \scripte), U, \mu$, and $\epsilon$. Recall that $U$ and $\mu$ are defined as upper and lower bounds of the initial inputs at all nodes, respectively. Therefore, trivially, the algorithm satisfies the termination property. Later, using (\ref{eq:tend}), we define a suitable value for $t_{end}$.

\subsection{$\epsilon$-agreement Property}
\label{s:agreement}

The proof below follows the same structure in our prior works on node failures \cite{Tseng_general,vaidya_icdcn14} for proving correctness of other consensus algorithms with Byzantine nodes.

Denote by $R_F$ the set of all the link-reduced graph of $G(\scriptv, \scripte)$ corresponding to some faulty link set $F$. Let

\[
r = \sum_{F \subset \scripte,~|F| \leq f} |R_F|
\]

Note that $r$ only depends on $G(\scriptv, \scripte)$ and $f$, and is a finite integer.

Consider iteration $t~(t\geq 1)$. Recall that $F[t]$ denote the set of faulty links in iteration $t$. Then for each link-reduced graph $\graphh[t] \in R_{F[t]}$, define connectivity matrix $\matrixh[t]$ as follows, where $1 \leq i,j \leq n$:

\begin{itemize}
\item $\matrixh_{ij}[t] = 1$, if either $j = i$, or edge $(j,i)$ exists in link-reduced graph $\graphh$;
\item $\matrixh_{ij}[t] = 0$, otherwise.
\end{itemize}

Thus, the non-zero elements of row $\matrixh_i[t]$ correspond to the incoming links at node $i$ in the link-reduced graph $\graphh[t]$, or the self-loop at $i$. Observe that $\matrixh[t]$ has a non-zero diagonal.

Based on {\em Condition S} and Lemma \ref{lemma:tm2cm}, we can show the following key lemmas.

\begin{lemma}
\label{lemma:non-zero}
For any $\graphh[t] \in R_{F[t]}$, and $k \geq n,~{\normalfont\bf \matrixh}^{k}[t]$ has at least one non-zero column, i.e., a column with all elements non-zero.
\end{lemma}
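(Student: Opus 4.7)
The plan is to interpret the powers of $\matrixh[t]$ combinatorially as walk counts in the associated digraph, and then exploit two facts: the link-reduced graph has a single source component (so some node can reach all others), and every diagonal entry of $\matrixh[t]$ is $1$ (so short walks can be padded to length exactly $k$ using self-loops).

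First I would fix an arbitrary $\graphh[t] \in R_{F[t]}$ and apply Lemma~\ref{lemma:path} (which uses {\em Condition S}) to obtain a node $s \in \scriptv$ that has a directed path in $\graphh[t]$ to every other node in $\scriptv$. The target column will be column $s$ of $\matrixh^{k}[t]$.

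Next I would record the combinatorial interpretation of entries of $\matrixh^{k}[t]$. By the definition of $\matrixh[t]$, we have $\matrixh_{ij}[t] \in \{0,1\}$, and $\matrixh_{ij}[t] = 1$ exactly when $j=i$ or edge $(j,i)$ is present in $\graphh[t]$. Expanding the matrix product, $(\matrixh^{k}[t])_{ij}$ equals the number of sequences $j = u_0, u_1, \dots, u_k = i$ such that each consecutive pair $(u_{\ell-1}, u_\ell)$ is either an edge of $\graphh[t]$ or a self-loop (since $\matrixh_{\ell\ell}[t]=1$). In particular, $(\matrixh^{k}[t])_{ij} \neq 0$ iff there is a walk of length exactly $k$ from $j$ to $i$ in the digraph obtained from $\graphh[t]$ by attaching a self-loop at every vertex.

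Finally I would use $s$ from the first step. For each $i \in \scriptv$ there is a directed path from $s$ to $i$ in $\graphh[t]$ of some length $\ell_i$. Since $\graphh[t]$ is a simple digraph on $n$ vertices, any such path may be chosen to have $\ell_i \leq n-1 \leq k$. Prepending $k - \ell_i$ self-loops at $s$ to this path produces a walk of length exactly $k$ from $s$ to $i$ in the augmented digraph, so $(\matrixh^{k}[t])_{is} \geq 1 > 0$ for every $i \in \scriptv$. Hence column $s$ of $\matrixh^{k}[t]$ is entirely non-zero, proving the lemma.

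The only potential obstacle is purely bookkeeping: keeping the row/column convention of $\matrixh[t]$ straight (so that matrix powers correspond to walks in the right direction), and checking that the bound $k \geq n$ is enough to accommodate simple paths of length up to $n-1$ plus self-loop padding; both are immediate once the self-loops built into $\matrixh[t]$ are used.
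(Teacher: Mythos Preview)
Your argument is correct and follows essentially the same route as the paper: invoke Lemma~\ref{lemma:path} (via {\em Condition~S}) to obtain a node reaching all others, then use the non-zero diagonal of $\matrixh[t]$ to pad paths into walks of length exactly $k$, so that the corresponding column of $\matrixh^{k}[t]$ is entirely non-zero. If anything, your treatment of the self-loop padding and the row/column convention is more explicit than the paper's, which simply asserts that $\matrixh^{k}_{jp}[t]$ is non-zero whenever $p$ has a directed path to $j$ of length at most $k$.
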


\begin{proof}
$G(\scriptv,\scripte)$ satisfies the {\em Condition S}. Therefore, by Lemma \ref{lemma:path},
there exists at least one node $p$ in the link-reduced graph $\graphh[t]$ that has directed paths to all the nodes in $\graphh[t]$ (consisting of the edges in $\graphh[t]$). $\matrixh^k_{jp}[t]$ of product $\matrixh^k[t]$ is $1$ if and only if node $p$ has a directed path to node $j$ consisting of at most $k$ edges in $\graphh[t]$. Since the length of the path from $p$ to any other node in $\graphh[t]$ is at most $n$, and $p$ has directed paths to all the nodes, for $k \geq n$ the $p$-th column of matrix $\matrixh^{k}[t]$ will be non-zero.\footnote{That is, all the elements of the column will be non-zero. Also, such a non-zero column will exist in $\matrixh^{n-1}[t]$, too. We use the loose bound of $n$ to simplify the presentation.} 
\fillbox
\end{proof}

For matrices $\textbf{A}$ and $\textbf{B}$ of identical dimension, we say that $\textbf{A} \leq \textbf{B}$ iff $\gamma \textbf{A}_{ij} \leq \textbf{B}_{ij}$ for all $i, j$. Lemma below relates the transition matrices with the connectivity matrices. Constant $\beta$ used in the lemma below was introduced in
Lemma \ref{lemma:tm2cm}.

\begin{lemma}
\label{lemma:cm}
For any $t \geq 1$, there exists a link-reduced graph $\graphh[t] \in R_{F[t]}$ such  that $\beta {\normalfont\bf \matrixh[t] \leq \matrixm}[t]$, where $\matrixh[t]$ is the connectivity matrix for $\graphh[t]$.
\end{lemma}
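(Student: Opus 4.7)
The plan is to build $\graphh[t]$ explicitly from the sets $N_i^F$ and $N_i^r$ supplied by Lemma~\ref{lemma:tm2cm}, and then do an entry-by-entry comparison of $\beta\matrixh[t]$ and $\matrixm[t]$.

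First, I would recall that Lemma~\ref{lemma:tm2cm} guarantees, for the fixed iteration $t$, the existence of sets $N_i^r \subseteq N_i^-$ with $|N_i^r|\le f$ such that $\matrixm_{ij}[t]\ge\beta$ for every $i\in\scriptv$ and every $j\in\{i\}\cup(N_i^- - N_i^F - N_i^r)$. Since $N_i^F$ is exactly the set of in-neighbors $j$ of $i$ whose edge $(j,i)$ lies in $F[t]$, the definition of a link-reduced graph (Definition~\ref{def:reduced}) allows me to carry out the following two-stage construction. Stage one: remove from $\scripte$ every edge in $F[t]$; this eliminates at node $i$ precisely the incoming edges from $N_i^F$. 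Stage two: at each node $i$, remove the edges $\{(j,i)\mid j\in N_i^r\setminus N_i^F\}$, whose number is at most $|N_i^r|\le f$, as permitted. Call the resulting graph $\graphh[t]$; by construction $\graphh[t]\in R_{F[t]}$, and the in-neighbors of $i$ in $\graphh[t]$ are exactly $N_i^- - N_i^F - N_i^r$.

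Next, I would verify the matrix inequality $\beta\matrixh[t]\le\matrixm[t]$ entrywise. By the definition of the connectivity matrix, $\matrixh_{ij}[t]\in\{0,1\}$, and $\matrixh_{ij}[t]=1$ happens only in two cases: (a) $j=i$, in which case Lemma~\ref{lemma:tm2cm} (applied to the term $j\in\{i\}$) yields $\matrixm_{ii}[t]\ge\beta$; or (b) $(j,i)$ is an edge of $\graphh[t]$, which by the construction above means $j\in N_i^- - N_i^F - N_i^r$, and Lemma~\ref{lemma:tm2cm} again gives $\matrixm_{ij}[t]\ge\beta$. Whenever $\matrixh_{ij}[t]=0$, the inequality $\beta\cdot 0 \le \matrixm_{ij}[t]$ is immediate because $\matrixm[t]$ is row stochastic and hence nonnegative.

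The argument is essentially bookkeeping; there is no real obstacle once the sets $N_i^r$ from Lemma~\ref{lemma:tm2cm} are in hand. The only delicate point is ensuring stage two of the construction respects the ``up to $f$'' cap in Definition~\ref{def:reduced}: I handle this by only removing the edges from $N_i^r\setminus N_i^F$ rather than all of $N_i^r$, since edges to nodes in $N_i^F$ have already been deleted in stage one. This keeps $\graphh[t]$ a legitimate member of $R_{F[t]}$ while still matching every unit entry of $\matrixh[t]$ with an entry of $\matrixm[t]$ that is at least $\beta$.
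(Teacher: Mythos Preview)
Your proposal is correct and follows essentially the same construction as the paper: remove $F[t]$, then at each node remove the incoming edges from the set $N_i^r$ provided by Lemma~\ref{lemma:tm2cm}, and compare entries. Your write-up is in fact slightly more careful than the paper's, since you explicitly take $N_i^r\setminus N_i^F$ in stage two to respect the ``up to $f$ \emph{other} incoming links'' clause of Definition~\ref{def:reduced}, and you note that the zero entries of $\matrixh[t]$ are handled by the nonnegativity of the row-stochastic $\matrixm[t]$.
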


\begin{proof}
First, let us construct a link-reduced graph $\graphh[t]$ by first removing $F[t]$ from $G(\scriptv, \scripte)$. Recall that $F[t]$ is the set of faulty links in iteration $t$. Then for each $i$, remove a set of at most $f$ node $i$'s incoming links as defined in Lemma \ref{lemma:tm2cm} ($N_i^r$). As a result, we have obtained a link-reduced graph $\graphh[t]$ such that $\matrixm_{ij}[t] \geq \beta$, if $j = i$ or edge
$(j, i)$ is in the link-reduced graph $\graphh[t]$.

Denote by $\matrixh[t]$ the connectivity matrix for the link-reduced graph $\graphh[t]$. Then, $\matrixh_{ij}[t]$ denotes the element
in $i$-th row and $j$-th column of $\matrixh[t]$. By definition of the connectivity matrix, we know that $\matrixh_{ij}[t] = 1$, if $j = i$ or edge $(j, i)$ is in the link-reduced graph; otherwise, $\matrixh_{ij}[t] = 0$.

The statement in the lemma then follows from the above two observations.
\fillbox
\end{proof}

\begin{lemma}
\label{l_product_H}
For any $z\geq 1$,
at least one column in the matrix product $\Pi_{t=u}^{u+r n-1} \, \bfH[t]$ is non-zero. 
\end{lemma}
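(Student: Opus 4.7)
The plan is to combine a pigeonhole argument on the sequence of connectivity matrices with the path-based construction that underlies Lemma \ref{lemma:non-zero}. Each $\bfH[t]$ comes from the finite collection of link-reduced-graph connectivity matrices, whose total cardinality is at most $r$. Hence among the $rn$ matrices $\bfH[u], \bfH[u+1], \ldots, \bfH[u+rn-1]$, the pigeonhole principle gives some specific connectivity matrix $\bfH$ (for a link-reduced graph $\graphh$) that appears at least $n$ times, at indices $u + s_1 < u + s_2 < \cdots < u + s_n$. Applying Lemma \ref{lemma:path} to $\graphh$, there is a node $p$ with a directed path of length at most $n-1$ in $\graphh$ to every other node in $\scriptv$; I will show that column $p$ of $P := \Pi_{t=u}^{u+rn-1} \bfH[t]$ is non-zero.

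Because each $\bfH[t]$ is a $0/1$ matrix with non-zero diagonal, the combinatorial interpretation of the backward product in (\ref{backward}) gives $P_{ip} > 0$ iff there exists a sequence $p = k_0, k_1, \ldots, k_{rn} = i$ such that for every $s$, either $k_{s+1} = k_s$ (using a self-loop in $\bfH[u+s]$) or $(k_s, k_{s+1})$ is an edge in the link-reduced graph $\graphh[u+s]$ underlying $\bfH[u+s]$. For any target $i \in \scriptv$, fix a path $p = v_0, v_1, \ldots, v_m = i$ in $\graphh$ with $m \leq n-1$. I build the walk by advancing one step along this path at each of the first $m$ ``active'' times $s_1, \ldots, s_m$ (where $\bfH[u+s_k] = \bfH$ contains the required edge $(v_{k-1}, v_k)$) and staying put via a self-loop at every other step, including the remaining active times $s_{m+1}, \ldots, s_n$. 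Since $m \leq n$, this construction is always feasible, so $P_{ip} > 0$ for every $i \in \scriptv$ and column $p$ of $P$ is non-zero.

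The main point requiring care is the bookkeeping between chronological order, the backward-product convention, and the walk interpretation of $P_{ip}$; once this is set up, the underlying formal step is just the elementary inequality $(\bfB\bfA)_{ij} \geq \bfB_{ii}\bfA_{ij}$ for non-negative $\bfB, \bfA$, which shows that any positive column entry of a running product survives multiplication on the left by any matrix with non-zero diagonal (hence the idle ``stay-put'' rounds truly do nothing harmful). The slack $n - m \geq 1$ between active time steps and path length comfortably absorbs any such idle rounds, and no extra combinatorial argument is needed beyond the pigeonhole step.
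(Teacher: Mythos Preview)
Your proposal is correct and follows essentially the same approach as the paper: a pigeonhole step to find a connectivity matrix $\bfH_*$ occurring at least $n$ times, combined with the non-zero diagonals of all $\bfH[t]$ to propagate reachability from the source node $p$ across the full product. The only difference is cosmetic: the paper invokes Lemma~\ref{lemma:non-zero} directly and leaves the ``interspersed with non-zero-diagonal matrices'' step to a footnote, whereas you unpack that step explicitly via the walk construction (advance along the path in $\graphh$ at the active times, self-loop elsewhere); both arguments are the same idea expressed at different levels of detail.
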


\begin{proof}
Since $\Pi_{t=u}^{u+r n-1} \, \bfH[t]$ consists of $r n$ connectivity matrices
corresponding to link-reduced graphs, and the number of all link-reduced graphs for $F$ ($|F| \leq f$) is $r$,
connectivity matrices corresponding to at least one link-reduced graph, say matrix $\bfH_*$\,, will appear in the above product at least $n$ times.

Now observe that: (i)
By Lemma \ref{lemma:non-zero}, $\bfH_*^{n}$ contains a non-zero
column, say the $k$-th column is non-zero,
and (ii) by definition, all the $\bfH[t]$ matrices in the product contain a non-zero diagonal. These two observations together imply that the $k$-th column in the above product is non-zero.\footnote{The product $\Pi_{t=u}^{u+r n-1} \, \bfH[t]$ can be viewed as the product of $n$ instances of $\matrixh_*$ ``interspersed" with matrices with non-zero diagonals.}
\fillbox
\end{proof}

Let us now define a sequence of matrices $\bfQ(i)$, $i\geq 1$, such that
each of these matrices is a product of $r n$ of the
$\bfM[t]$ matrices. Specifically,
\begin{eqnarray}
\bfQ(i) &=& \Pi_{t=(i-1)r n+1}^{i r n} ~ \bfM[t]
\label{e_Q_i}
\end{eqnarray}
From (\ref{eq:matrixT}) and (\ref{e_Q_i})
observe that
\begin{eqnarray}
\bfv[k r n] & = & \left(\, \Pi_{i=1}^k ~ \bfQ(i) \,\right)~\bfv[0]
\end{eqnarray}

\begin{lemma}
\label{l_Q}
For $i\geq 1$, $\bfQ(i)$ is a scrambling row stochastic matrix,
and \[ \lambda(\bfQ(i))\leq 1-\beta^{r n}.\]
\end{lemma}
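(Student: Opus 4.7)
The plan is to assemble three earlier ingredients: (a) each $\bfM[t]$ is row stochastic and entrywise dominates $\beta\bfH[t]$ for some connectivity matrix $\bfH[t]$ of a link-reduced graph (Lemma~\ref{lemma:cm}); (b) row stochasticity is preserved under matrix multiplication; and (c) the product of $rn$ such connectivity matrices has at least one entirely non-zero column (Lemma~\ref{l_product_H}). The conclusion will then follow from the coefficient-of-ergodicity bound in Lemma~\ref{lemma:ergodicity2}.

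First I would observe that $\bfQ(i)$ is row stochastic, because a product of row stochastic matrices is row stochastic. Next, for each $t$ in the window $(i-1)rn+1,\dots,irn$, I would invoke Lemma~\ref{lemma:cm} to pick a connectivity matrix $\bfH[t]$ such that $\beta\bfH[t]\leq \bfM[t]$ entrywise. Since all matrices involved are entrywise non-negative, the standard fact that $0\leq A_1\leq B_1$ and $0\leq A_2\leq B_2$ imply $A_2A_1\leq B_2B_1$ extends by induction over the $rn$ factors to yield
\[
\bfQ(i) \;=\; \Pi_{t=(i-1)rn+1}^{irn}\bfM[t] \;\geq\; \beta^{rn}\,\Pi_{t=(i-1)rn+1}^{irn}\bfH[t].
\]

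Now I would apply Lemma~\ref{l_product_H} with $u=(i-1)rn+1$ to conclude that the connectivity-matrix product on the right has some column $g$ all of whose entries are non-zero. Since each $\bfH[t]$ has $0/1$ entries, products of them have non-negative integer entries, so ``non-zero'' here means ``at least $1$''. Consequently every entry in column $g$ of $\bfQ(i)$ is bounded below by $\beta^{rn}$. This immediately makes $\bfQ(i)$ scrambling, since any two rows share a positive entry in column $g$. Applying Lemma~\ref{lemma:ergodicity2} with $\gamma=\beta^{rn}$ then yields the desired bound $\lambda(\bfQ(i))\leq 1-\beta^{rn}$.

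I do not anticipate a real obstacle: the only point requiring any care is verifying that entrywise inequalities propagate through the matrix product, which uses only non-negativity of every factor. The rest is a direct assembly of lemmas already in hand, and the particular choice of window length $rn$ is precisely what makes Lemma~\ref{l_product_H} applicable (there are only $r$ distinct link-reduced graphs, so some connectivity matrix must recur at least $n$ times in any $rn$ consecutive iterations, which is what produces the common non-zero column).
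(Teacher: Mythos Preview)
Your proposal is correct and follows essentially the same route as the paper's own proof: row stochasticity of $\bfQ(i)$ from the factors, the entrywise bound $\beta^{rn}\Pi\bfH[t]\leq\bfQ(i)$ obtained via Lemma~\ref{lemma:cm} and non-negativity, the non-zero column from Lemma~\ref{l_product_H}, and the conclusion via Lemma~\ref{lemma:ergodicity2}. Your added remark that entrywise domination propagates through products of non-negative matrices is a useful detail that the paper leaves implicit.
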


\begin{proof}

$\bfQ(i)$ is a product of row stochastic matrices ($\bfM[t]$); therefore,
$\bfQ(i)$ is row stochastic.
From Lemma \ref{lemma:cm}, for each $t\geq 1$,
\[
\beta \, \bfH[t] ~ \leq ~ \bfM[t]
\]
Therefore, 
\[
\beta^{r n} ~ \Pi_{t=(i-1)r n+1}^{i r n} ~ \bfH[t] ~ \leq 
~ \Pi_{t=(i-1) r n+1}^{i r n} ~ \bfM[t] ~ =
~ \bfQ(i)
\]
By using $u=(i-1)n+1$ in Lemma \ref{l_product_H},
we conclude that the matrix product on the left side
of the above inequality contains a non-zero column. Therefore, since $\beta > 0$, $\bfQ(i)$ on the
right side of the inequality also contains
a non-zero column.

Observe that $r n$ is finite, and hence, $\beta^{r n}$
is non-zero. Since the non-zero terms in $\bfH[t]$ matrices are all 1,
the non-zero elements in $\Pi_{t=(i-1) r n+1}^{i r n} \bfH[t]$
must each be $\geq$ 1. Therefore, there exists a non-zero column in $\bfQ(i)$
with all the elements in the column being $\geq \beta^{r n}$.
Therefore, by Lemma \ref{lemma:ergodicity2}, $\lambda(\bfQ(i))\leq 1-\beta^{r n}$, and $\bfQ(i)$ is a scrambling matrix. 
\fillbox
\end{proof}

Let us now continue with the proof of $\epsilon$-agreement. Consider the coefficient of ergodicity $\delta(\Pi_{u=1}^t \matrixm[u])$.

\begin{align}
\delta(\Pi_{u=1}^t \bfM[u]) &= \delta\left(
\left(\Pi_{u=(\lfloor\frac{t}{r n}\rfloor)r n+1}^t \bfM[u]\right)
\left(\Pi_{u=1}^{\lfloor\frac{t}{r n}\rfloor} \bfQ(i)\right)\right)~~~\text{by definition of}~~\bfQ(u)\nonumber\\
&\leq \lambda
\left(\Pi_{u=(\lfloor\frac{t}{r n}\rfloor)r n+1}^t \bfM[u]\right)
\left(\Pi_{u=1}^{\lfloor\frac{t}{r n}\rfloor} \lambda\left( \bfQ(u)\right)\right)~~~\text{by Lemma \ref{lemma:ergodicity}}\nonumber\\
&\leq \Pi_{u=1}^{\lfloor\frac{t}{r n}\rfloor} \lambda\left(\bfQ(u)\right)~~~\text{because}~~\lambda(\cdot) \leq 1\nonumber\\
&\leq \left( 1-\beta^{rn}\right)^{\lfloor\frac{t}{r n}\rfloor}~~~\text{by Lemma \ref{l_Q}}\label{eq:t}
\end{align}

Observe that the upper bound on right side of (\ref{eq:t}) depends only on graph $G(\scriptv, \scripte)$ and $t$, and is
independent of the input states, and the behavior of the faulty links. Moreover, the
upper bound on the right side of (\ref{eq:t}) is a non-increasing function of $t$. Define $t_{end}$ as the smallest positive integer such that the right hand side of (\ref{eq:t}) is smaller than $\frac{\epsilon}{n \max(|U|, |\mu|)}$. Recall that $U$ and $\mu$ are defined as the upper and lower bound of the inputs at all nodes. Thus,

\begin{equation}
\label{eq:tend}
\delta(\Pi_{u=1}^{t_{end}} \bfM[u]) \leq \left( 1-\beta^{rn}\right)^{\lfloor\frac{t_{end}}{r n}\rfloor} <
\frac{\epsilon}{n \max(|U|, |\mu|)}
\end{equation}

Recall that $\beta$ and $r$ depend only on $G(\scriptv, \scripte)$. Thus, $t_{end}$ depends only on graph $G(\scriptv, \scripte)$, and constants $U, \mu$ and $\epsilon$.

Recall that $\Pi_{u=1}^t \matrixm[u]$ is an $n \times n$ row stochastic matrix. let $\matrixm^* = \Pi_{u=1}^t \matrixm[u]$. From \ref{eq:matrixT}, we have $v_j[t] = \matrixm^*_j v[0]$. That is, the state of any node $j$ can be obtained as the product of the $j$-th row of $\matrixm^*$ and $v[0]$. Now, consider any two nodes $j, k$, we have

\begin{align}
|\bfv_j[t] - v_k[t]| &= |\matrixm^*_j v[0] - \matrixm^*_k v[0]| \nonumber\\
&= |\Sigma_{i=1}^n \matrixm^*_{ji} v_i[0] - \Sigma_{i=1}^n \matrixm^*_{ki} v_i[0]| \nonumber\\
&= |\Sigma_{i=1}^n \left(\matrixm^*_{ji} - \matrixm^*_{ki} \right) v_i[0]| \nonumber\\
&\leq \Sigma_{i=1}^n | \matrixm^*_{ji} - \matrixm^*_{ki} | |v_i[0]| \nonumber\\
&\leq \Sigma_{i=1}^n \delta(\matrixm^*) |v_i[0]| \nonumber\\
&\leq n \delta(\matrixm^*) \max(|U|, |\mu|) \nonumber\\
&\leq n \delta(\Pi_{u=1}^t \matrixm[u]) \max(|U|, |\mu|) \label{eq:delta}
\end{align}

Therefore, by (\ref{eq:tend}) and (\ref{eq:delta}), we have

\begin{equation}
\label{eq:final}
|v_j[t_{end}] - v_k[t_{end}]| < \epsilon
\end{equation}

Since the output of the nodes equal its state at termination (after $t_{end}$ iterations). Thus, (\ref{eq:final}) implies that Algorithm 1 satisfies the $\epsilon$-agreement property.

\section{Summary}

This paper explores approximate consensus problem under transient Byzantine link failure model. We address a particular class of iterative algorithms in arbitrary directed graphs, and prove a necessary and sufficient condition for the graphs to be able to solve the approximate consensus problem iteratively.



\appendix

\section*{Appendix}

\section{Example Network}
\label{a:example}

\begin{lemma}
The graph in Figure \ref{f:eg} satisfies {\em Condition P} when $f = 1$.
\end{lemma}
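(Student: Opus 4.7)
The strategy is a finite case-check of Condition P for the 5-node graph in Figure~\ref{f:eg}. First I would record the relevant data: every node has in-degree three, with $N_A^-=\{B,C,D\}$, $N_B^-=\{A,C,D\}$, $N_C^-=\{A,B,D\}$, $N_D^-=\{A,B,C\}$, and $N_E^-=\{B,C,D\}$; in particular $E$ has out-degree zero, and $B,C,D$ play fully symmetric roles in the graph. Call a node $X\in\scriptv$ a \emph{clique node} if $X\ne E$.

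The key reformulation is as follows. Fix a partition $L,C,R$ with $L,R$ nonempty. Condition~(i) fails in $G'=(\scriptv,\scripte-F)$ iff every $i\in L$ has at most $f=1$ incoming edge from $C\cup R$ in $\scripte-F$, and analogously for (ii). Since removing one faulty edge lowers the relevant count at any single node by at most one, and since the two edge sets involved (incoming to $L$ from $C\cup R$, incoming to $R$ from $L\cup C$) are disjoint, the partition $L,C,R$ satisfies Condition~P iff
\[
S(L,C,R) \;:=\; \sum_{i\in L}\max\bigl(0,\,|N_i^-\cap(C\cup R)|-1\bigr) + \sum_{j\in R}\max\bigl(0,\,|N_j^-\cap(L\cup C)|-1\bigr) \;\ge\; f+1 = 2.
\]
So it suffices to verify $S(L,C,R)\ge 2$ for every admissible partition.

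To finish, I would parametrize partitions by the location of the two asymmetric nodes---$A$, the only clique node with no edge to $E$, and $E$, the only sink---together with the triple $(|L\cap\{B,C,D\}|,\,|C\cap\{B,C,D\}|,\,|R\cap\{B,C,D\}|)$ summing to three. For a clique vertex $X\in L$, $|N_X^-\cap(C\cup R)| = 4-|L\cap\{A,B,C,D\}|$; when $E\in L$, $|N_E^-\cap(C\cup R)| = 3-|L\cap\{B,C,D\}|$; the formulas for $R$ are symmetric. Plugging these closed forms into $S$ reduces each case to a one-line arithmetic check, and the $B$-$C$-$D$ symmetry collapses the enumeration to only a handful of essentially distinct configurations.

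The main obstacle is careful bookkeeping rather than any deep insight. The tight configurations are $L=\{A,B,C,D\},\,R=\{E\}$ and its mirror $L=\{E\},\,R=\{A,B,C,D\}$, both of which saturate $S=2$ via $E$'s three incoming edges from $\{B,C,D\}$. Throughout the analysis one must remember that $A\not\to E$, so whenever $E\in L\cup R$, its incoming count from the opposite side draws only from $\{B,C,D\}$. Once this subtlety is tracked, every partition yields $S\ge 2$, which establishes Condition P for $f=1$.
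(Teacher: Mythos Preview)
Your approach is correct and takes a genuinely different route from the paper's. The paper's proof is a direct ad hoc case analysis: it first observes that the 4-clique on $\{A,B,C,D\}$ already satisfies Condition~P, so only partitions with $E\in L\cup R$ need attention, and then splits on how many clique vertices accompany $E$ in its block. Your reformulation via the surplus
\[
S(L,C,R)=\sum_{i\in L}\max\bigl(0,\,|N_i^-\cap(C\cup R)|-1\bigr)+\sum_{j\in R}\max\bigl(0,\,|N_j^-\cap(L\cup C)|-1\bigr)
\]
together with the equivalence ``Condition~P holds for $(L,C,R)$ iff $S\ge f+1$'' is a clean and correct device: the two edge families involved have heads in $L$ and $R$ respectively and hence are disjoint, so any $F$ with $|F|\le f$ can erase at most $f$ units of surplus, and conversely any surplus $\le f$ can be eliminated by an $F$ of exactly that size drawn from those edges. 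This equivalence is valid for arbitrary graphs and arbitrary $f$, so your argument is more systematic and reusable than the paper's bespoke check, at the cost of a slightly longer enumeration once the closed forms are plugged in. One small expository inaccuracy: the partitions $L=\{A,B,C,D\},\,R=\{E\}$ and its mirror are not the \emph{only} tight ones---for instance $L=\{A,B,C\},\,C=\{E\},\,R=\{D\}$ and $L=\{A\},\,R=\{B,C,D,E\}$ also yield $S=2$---but this does not affect correctness, since you only need $S\ge 2$ everywhere, which your parametrized arithmetic delivers.
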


\begin{proof}
Denote by $G$ the graph in Figure \ref{f:eg}. First observe that a clique of $4$ nodes satisfies {\em Condition P} when $f=1$. Thus, for $G$, we only need to consider the case when node $E$ is in either $L$ or $R$; otherwise, some node in $L$ (or $R$) from the clique (formed by nodes $A, B, C, D$) will have at least $f+1 = 2$ incoming links from $R$ (or $L$) excluding link in $F$. 

Without loss of generality, consider the case when $E$ is in $L$. Consider the following cases:

\begin{itemize}
\item One of the nodes $A, B, C, D$ is in $L$: say node $X$ is in $L$ besides $E$. Then node $X$ has at least $f+1$ incoming links from $R$ excluding link in $F$.

\item Two of the nodes $A, B, C, D$ are in $L$: say nodes $X_1, X_2$ are in $L$ besides $E$. Then either node $X_1$ or $X_2$ has at least $f+1$ incoming links from $R$ excluding link in $F$.

\item Three of the nodes $A, B, C, D$ are in $L$: say node $Y$ is the only node in $R$, since all the other nodes are in $L$. Then node $Y$ has at least $f+1$ incoming links from $L$ excluding link in $F$.

\end{itemize}
In every case, either $L \cup C \rightarrow R$ or $C \cup R \rightarrow L$. Thus, $G$ satisfies {\em Condition P}.

\fillbox
\end{proof}

\section{Proof of Lemma \ref{lemma:tm2cm}}
\label{a:tm2cm}

We prove the following Lemma in Section \ref{s:sufficiency}.

~

\noindent {\bf Lemma} \ref{lemma:tm2cm}.
{\em 
The {\em Update} step in iteration $t~(t \geq 1)$ of Algorithm 1 at the nodes can be expressed as 

\begin{equation}
v[t] = \matrixm[t] v[t-1]
\end{equation}
where {\normalfont$\matrixm[t]$} is an $n \times n$ row stochastic transition matrix with the following property: there exist a constant $\beta~(0 < \beta \leq 1)$ that depends only on graph $G(\scriptv, \scripte)$, and $N_i^r$ such that for each $i \in \scriptv$, and  for all $j \in \{i\}\cup(N_i^- - N_i^F - N_i^r)$, 

\[
\matrixm_{ij}[t] \geq \beta
\]
}

\begin{proof}

We prove the correctness of Lemma \ref{lemma:tm2cm} by constructing $\matrixm_i[t]$ for $1 \leq i \leq n$ that satisfies the conditions in Lemma \ref{lemma:tm2cm}. Recall that $F[t]$ denotes the set of faulty links in the $t$-th iteration.

Consider a node $i$ in iteration $t~(t \geq 1)$. 
In the {\em Update} step of Algorithm 1, recall that the smallest and the largest $f$ values are removed from $r_i[t]$ by node $i$. Denote by $\scripts$ and $\scriptl$, respectively, the set of nodes\footnote{Although $\scripts$ and $\scriptl$ may be different for each iteration $t$, for simplicity, we do not explicitly represent this dependence on $t$ in the notations $\scripts$ and $\scriptl$.} from whom the smallest and the largest $f$ values were received by node $i$ in iteration $t$. Define sets $\scripts_g$ and $\scriptl_g$ to be subsets of $\scripts$ and $\scriptl$ that contain all the nodes from whom node $i$ receives the correct value in $\scripts$ and $\scriptl$, respectively. That is, $\scripts_g = \{j~|~j \in \scripts,~(j,i) \in \scripte - F[t]\}$ and $\scriptl_g = \{j~|~j \in \scriptl,~(j,i) \in \scripte - F[t]\}$.

Construction of $\matrixm_i[t]$ differs somewhat depending on whether
sets $\scripts_g, \scriptl_g$ and $N_i^F$ are empty or not.
We divide the possibilities into 3 separate cases:

\begin{itemize}
\item Case I: $\scripts_g \neq \emptyset, \scriptl_g \neq \emptyset$, and $N_i^F \neq \emptyset$.

\item Case II: $\scripts_g \neq \emptyset, \scriptl_g \neq \emptyset$, and $N_i^F = \emptyset$.

\item Case III: at most one of $\scripts_g$ and $\scriptl_g$, and $N_i^F = \emptyset$.

\end{itemize}
Observe that if $\scripts_g$ ($\scriptl_g$) is empty, then $N_i^F = \emptyset$ and $\scriptl = \scriptl_g$ ($\scripts = \scripts_g$), since there are at most $f$ faulty links and $|\scripts| = |\scriptl| = f$. Therefore, the 3 cases above cover all the possible scenarios.

\noindent {\bf Case I}

~
In Case I,
$\scripts_g \neq \emptyset, \scriptl_g \neq \emptyset$, and $N_i^F \neq \emptyset$. Let $m_{\scripts}$ and $m_{\scriptl}$ be defined as shown below. Recall that the incoming links from the nodes in $\scripts_g$ and $\scriptl_g$ to node $i$ are all fault-free, and therefore, for any node $j\in \scripts_g \cup \scriptl_g$, $w_j=v_j[t-1]$ (in the notation of Algorithm 1). That is, the value received by node $i$ from node $j$ is exactly the state at node $j$ in iteration $t-1$.

\begin{equation*}
m_{\scripts} = \frac{\sum_{j \in \scripts_g} v_j[t-1]}{|\scripts_g|}~~~~~\text{and}~~~~~m_{\scriptl} = \frac{\sum_{j \in \scriptl_g} v_j[t-1]}{|\scriptl_g|}
\end{equation*}
Now, consider any node $k \in N_i^F$. By the definition of sets $\scripts_g$ and $\scriptl_g$,
$m_{\scripts} \leq w_k \leq m_{\scriptl}$. Therefore, we can find weights $S_k \geq 0$ and $L_k \geq 0$ such that $S_k + L_k = 1$, and

\begin{eqnarray}
w_k & = & S_k~m_{\scripts} + L_k~m_{\scriptl} \\
& = & 
\frac{S_k}{|\scripts_g|}
\sum_{j \in \scripts_g} v_j[t-1]
+
\frac{L_k}{|\scriptl_g|}
\sum_{j \in \scriptl_g} v_j[t-1]
\label{eq:caseI}
\end{eqnarray}
Clearly, at least one of $S_k$ and $L_k$ must be $\geq 1/2$.

~
%
%

We now define elements $\matrixm_{ij}[t]$ of row $\matrixm_i[t]$:

\begin{itemize}
\item For $j \in N_i^*[t] - N_i^F$ : In this case, either the edge $(j,i)$ is fault-free, or $j = i$.
For each such $j$, define $\matrixm_{ij}[t] = a_i$. This is obtained by observing
in (\ref{e_T}) that the contribution of such a node $j$ to the new state
$v_i[t]$ is $a_i~w_j = a_i~v_j[t-1]$.

The elements of $\matrixm_i[t]$ defined here add up to $$|N_i^*[t] - N_i^F|~a_i$$

\item For $j\in \scripts_g\cup\scriptl_g$ : In this case, the edge $(j,i)$ is a fault-free.

For each $j \in \scripts_g$,
\[
\matrixm_{ij}[t] ~=~ a_i \, \sum_{k \in N_i^F} \frac{S_k}{|\scripts_g|}
\]
and for each node $j \in \scriptl_g$,
\[
\matrixm_{ij}[t] ~=~ a_i \, \sum_{k \in N_i^F} \frac{L_k}{|\scriptl_g|}
\]
To obtain these two expressions, we represent value $w_k$ sent via faulty link $(k,i)$ for each $k \in N_i^F$ using (\ref{eq:caseI}).  
Recall that this node $k$ contributes $a_iw_k$ to (\ref{e_T}).
The above two expressions are then obtained by summing (\ref{eq:caseI})
over all the nodes in $N_i^F$, and replacing this sum
by equivalent contributions by nodes in $\scripts_g$ and $\scriptl_g$.

The elements of $\matrixm_i[t]$ defined here add up to $a_i \, \sum_{k \in N_i^F} (S_k + L_k) = |N_i^F|~a_i$

\item For $j\in \scriptv - ((N_i^* - N_i^F) \cup \scripts_g \cup \scriptl_g)$ :
These nodes have not yet been considered above.
For each such node $j$, define $\matrixm_{ij}[t] = 0$.
\end{itemize}
With the above definition of $\matrixm_i[t]$, it should be easy to see
that $\matrixm_i[t]\,v[t-1]$ is, in fact, identical to $v_i[t]$ obtained using (\ref{e_T}). Thus, the above construction of $\matrixm_i[t]$ results in the values sent via faulty links to (\ref{e_T}) being replaced by an equivalent contribution from the nodes in $\scriptl_g$ and $\scripts_g$.

~

\paragraph{Properties of $\matrixm_i[t]$:}

First, we show that $\matrixm[t]$ is row stochastic. Observe that all
the elements of $\matrixm_i[t]$ are non-negative.
Also, all the elements of $\matrixm_i[t]$ above add up to
\[
|N_i^*[t] - N_i^F|~a_i + |N_i^F|~a_i = |N_i^*[t]|~a_i = 1
\]
because $a_i = 1/|N_i^*[t]|$ as defined in Algorithm 1.
Thus, $\matrixm_i[t]$ is a stochastic row vector.

Recall that from the above discussion, for $k\in N_i^F$,
one of $S_k$ and $L_k$ must be $\geq 1/2$.
Without loss of generality, assume that $S_s \geq 1/2$ for all nodes $s \in N_i^F$.
Consequently, for each node $j \in \scripts_g$, $\matrixm_{ij}[t] \geq \frac{a_i}{|\scripts_g|} S_s \geq \frac{a_i}{2|\scripts_g|}$. Also,
for each node $j$ in $N_i^*[t] - N_i^F$,
$\matrixm_{ij}[t] = a_i$.
Thus, if $\beta$ is chosen such that
\begin{equation}
\label{eq:beta_caseI}
0 < \beta \leq \frac{a_i}{2|\scripts_g|}
\end{equation}
and $N_i^r$ is defined to be $\scriptl_g$, then the condition in the
lemma holds for node $i$. That is, 
for all $j \in \{i\}\cup(N_i^- - N_i^F - N_i^r)$, 

\[
\matrixm_{ij}[t] \geq \beta
\]

~

\noindent {\bf Case II}

~

Now, we consider the case when $\scripts_g \neq \emptyset, \scriptl_g \neq \emptyset$, and $N_i^F = \emptyset$. That is, when each of $\scripts$ and $\scriptl$ contains at least one node from which the node $i$ receives correct value, and node $i$ receives correct value(s) from all the node(s) in $N_i^*[t]$. In fact, the analysis of Case II is very similar to the analysis presented above in Case I. We now discuss how the analysis of Case I can be applied to Case II. Rewrite (\ref{e_T}) as follows:

\begin{eqnarray}
v_i[t] & = & \frac{a_i}{2} v_i[t-1]  + \frac{a_i}{2} v_i[t-1]
	+ \sum_{j\in N_i^*[t] - \{i\}} a_iw_j \\
& = & a_iw_z + a_i w_i
	+ \sum_{j\in N_i^*[t] - \{i\}} a_iw_j 
\end{eqnarray}

In the above equation, $z$ is to be viewed as a ``virtual'' incoming
neighbor of node $i$, which has sent value $w_z=\frac{v_i[t-1]}{2}$
to node $i$ in iteration $t$.
With the above rewriting of state update,
the value received by node $i$ from itself should
be viewed as $w_i=\frac{v_i[t-1]}{2}$ instead of $v_i[t-1]$. 
With this transformation, Case II now becomes identical
to Case I, with virtual node $z$ being treated
as an incoming neighbor of node $i$.

In essence, a part of node $i$'s contribution (half, to be precise) is now replaced by equivalent contribution by nodes in $\scriptl_g$ and $\scripts_g$. 
We now define elements $\matrixm_{ij}[t]$ of row $\matrixm_i[t]$:

\begin{itemize}
\item For $j = i$: $\matrixm_{ij}[t] = \frac{a_i}{2}$. This is obtained by observing in (\ref{e_T}) that node $i$'s contribution to the new state $v_i[t]$ is $a_i\frac{v_i[t-1]}{2}$.

\item For $j \in N_i^*[t] - \{i\}$ : In this case, $j$ is a node from which node $i$ receives correct value. For each such $j$, define $\matrixm_{ij}[t] = a_i$. This is obtained by observing in (\ref{e_T}) that the contribution of node $j$ to the new state $v_i[t]$ is $a_iw_j = a_i v_j[t - 1]$.

\item For $j\in \scripts_g \cup \scriptl_g$ : In this case, $j$ is a node in $\scripts$ or  $\scriptl$ from which node $i$ receives correct value. 

For each $j \in \scripts_g$,
\[
\matrixm_{ij}[t] ~=~ \frac{a_i}{2} \, \frac{S_z}{|\scripts_g|}
\]
and for each node $j \in \scriptl_g$,
\[
\matrixm_{ij}[t] ~=~ \frac{a_i}{2} \, \frac{L_z}{|\scriptl_g|}
\]

where $S_z$ and $L_z$ are chosen such that $S_z + L_z = 1$ and $w_z = \frac{v_i[t-1]}{2} = \frac{S_z}{2} m_{\scripts} + \frac{L_z}{2} m_{\scriptl}$. Note that such $S_z$ and $L_z$ exist because by definition of $\scripts_g$ and $\scriptl_g$, $v_i[t-1] \geq w_j,~\forall j \in S_g$ and $v_i[t-1] \leq w_j,~\forall j \in L_g$. Then the two expressions above are obtained by replacing the contribution of the virtual node $z$ by an equivalent contribution by the nodes in $\scripts_g$ and $\scriptl_g$, respectively.

\item For $j\in \scriptv - (N_i^*[t] \cup \scripts_g \cup \scriptl_g)$ :
 These nodes have not yet been considered above.
For each such node $j$, define $\matrixm_{ij}[t] = 0$.
\end{itemize}

~

\paragraph{Properties of $\matrixm_i[t]$:}

By argument similar to that in {\em Case I}, $\matrixm_i[t]$ is row stochastic. Without loss of generality, suppose that $S_z \geq 1/2$. Then for each node $j \in \scripts_g$, $\matrixm_{ij}[t] = \frac{a_i}{2|\scripts_g|}S_z \geq \frac{a_i}{4|\scripts_g|}$. Also, for node $j$ in $N_i^*[t]-\{i\}$, $\matrixm_{ij}[t] = a_i$, and $\matrixm_{ii}[t] = \frac{a_i}{2}$. Recall that by definition, $|\scripts_g| \geq 1$. Hence, if $\beta$ is chosen such that 

\begin{equation}
\label{eq:beta-caseII}
0 < \beta \leq \frac{a_i}{4|\scripts_g|}
\end{equation}
and $N_i^r$ is defined to be equal to $\scriptl_g$, then the condition in the Lemma \ref{lemma:tm2cm} holds for node $i$. That is,
$\matrixm_{ij}[t] \geq \beta$ for $j \in \{i\} \cup (N_i^- - N_i^F - N_i^r)$. 

~

\noindent {\bf Case III}

Here, we consider the case when at most one of $\scripts_g$ and $\scriptl_g$ is empty, and $N_i^F = \emptyset$. Without loss of generality, suppose that $\scripts$ contains only nodes whose outgoing links to node $i$ is faulty in iteration $t$, i.e., $\scripts = \{j~|~(j,i) \in F[t]\}$. Since there are at most $f$ faulty links and $|\scripts| = f$, $\scriptl = \scriptl_g$. That is, the value received from each node in $\scriptl$ by node $i$ is correct.

In this case, define $\matrixm_{ij}[t]=a_i$ for $j\in N_i^*[t]$;
define $\matrixm_{ij}=0$ for all other nodes $j$.

~

\noindent {\em Properties of $\matrixm_i[t]$:}

All the elements of $\matrixm_i[t]$ are non-negative.
The elements of $\matrixm_i[t]$ defined above add up to
\[
|N_i^*[t]|~a_i = 1
\]
Thus, $\matrixm_i[t]$ is a stochastic row vector.

In Case III, recall that for any node $j$ in $N_i^*[t]$, $\matrixm_{ij}[t] = a_i$.
Thus, if $\beta$ is chosen such that
\begin{equation}
\label{eq:beta-caseIII}
0 < \beta \leq a_i
\end{equation}
and $N_i^r$ is defined to be equal to $\scriptl$,
then the condition in the
Lemma \ref{lemma:tm2cm} holds for node $i$. 

~

\noindent {\bf Putting Cases Together}

Now, let us consider Cases I-III together. From the definition of $a_i$ in Algorithm 1, observe that $a_i \geq \frac{1}{|N_i^-|+1}$ (because $f\geq 0$). Let us define
\[ \alpha = \min_{i\in\scriptv} \frac{1}{|N_i^-|+1}\]
Moreover, observe that $|\scripts_g| \leq n$ and $|\scriptl_g| \leq n$. Then define $\beta$ as 
\begin{equation}
\label{eq:beta}
\beta = \frac{\alpha}{4n}
\end{equation}
This definition satisfies constraints on $\beta$ in Cases I through III (conditions (\ref{eq:beta_caseI}), (\ref{eq:beta-caseII}) and (\ref{eq:beta-caseIII})). Thus, Lemma \ref{lemma:tm2cm} holds for all three cases with this choice in (\ref{eq:beta}).

\fillbox
\end{proof}

\end{document}